\newcommand{\shortversion}[1]{}
\newcommand{\longversion}[1]{#1}
\title{Parameterized Complexity and Kernel Bounds for Hard Planning Problems}
\author{
Christer B\"{a}ckstr\"{o}m\inst{1} \and
Peter Jonsson\inst{1} \and
Sebastian Ordyniak\inst{2} \and
Stefan~Szeider\inst{3}\thanks{Research supported by the ERC, grant
  reference 239962.}
}
\institute{
Department of Computer Science, Link{\"o}ping University, Link{\"o}ping, Sweden \and
Faculty of Informatics, Masaryk University, Brno, Czech Republic \and
Institute of Information Systems, Vienna University of Technology,
Vienna, Austria 
}
\title{Parameterized Complexity and Kernel Bounds for\\ Hard Planning Problems}
\author{
Christer B\"{a}ckstr\"{o}m$^1$,
Peter Jonsson$^1$,
Sebastian Ordyniak$^2$, and
Stefan Szeider$^3$\thanks{Research supported by the ERC, grant
  reference 239962.}\\[0.1cm]
\mbox{}\small$^1$Department of Computer Science, Link{\"o}ping University,
Link{\"o}ping, Sweden\\[-4pt]
\small christer.backstrom@liu.se, peter.jonsson@liu.se\\
\mbox{}\small$^2$Faculty of Informatics, Masaryk University, Brno, Czech Republic\\[-4pt] 
\small sordyniak@gmail.com\\
\mbox{}\small$^3$Institute of Information Systems, Vienna University of Technology,
Vienna, Austria\\[-4pt]
\small  stefan@szeider.net
}
\date{}
\newcommand{\hy}{\hbox{-}\nobreak\hskip0pt}
\newcommand{\SB}{\{\,}%
\newcommand{\SM}{\;{:}\;}%
\newcommand{\SE}{\,\}}%
\newcommand{\Card}[1]{|#1|}
\newcommand{\NP}{\text{\normalfont NP}}
\newcommand{\coNP}{\text{\normalfont co-NP}}
\newcommand{\FPT}{\text{\normalfont FPT}}
\newcommand{\W}[1]{\text{\normalfont W[#1]}}
\newtheorem{LEM}{Lemma} 
\newtheorem{THE}{Theorem} 
\newtheorem{PRO}{Proposition} 
\newtheorem{CLM}{Claim}
\renewenvironment{proof}{\vspace{-3mm}\begin{pf}}{\qed\end{pf}}
\let\phi=\varphi
\newcommand{\concat}{\circ}
\newcommand{\undv}{\mathbf{u}}
\newcommand{\NOPOLYKERNEL}{\coNP \subseteq \NP/\textup{poly}}
\renewcommand{\phi}{\varphi}
\renewcommand{\emptyset}{\varnothing}  % The usual empty set,  not LaTeX std.
\newcommand{\union}{\cup} 		% Union
\newcommand{\card}[1]{{|#1|}}		% Cardinality     
\newcommand{\set}[1]{\{{#1}\}}          % Set 
\newcommand{\tuple}[1]{\langle{#1}\rangle}  % Tuple
\newcommand{\seq}[1]{\langle #1 \rangle}
\newcommand{\instance}[1]{{\mathbb{#1}}}
\newcommand{\cc}[1]{{\mbox{\textnormal{#1}}}}  %% Complexity class
\newcommand{\poly}{\cc{P}}
\newcommand{\PSPACE}{\cc{\textsc{Pspace}}}
\newcommand{\insti}{\instance{I}}
\newcommand{\iplan}{\instance{P}}      % Plan instance
\newcommand{\strips}{{\textsc{Strips}}}
\newcommand{\sasplus}{{SAS$^+$}}
\newcommand{\vars}{V} % Instance variables
\newcommand{\dom}{D}  % Instance domain
\newcommand{\acts}{A}  % Instance action set
\newcommand{\init}{I}  % Instance init state
\newcommand{\goal}{G}  % Instance goal
\newcommand{\pre}{\mathrm{pre}}  % Action precond
\newcommand{\eff}{\mathrm{eff}}  % 
\newcommand{\proj}[2]{{#1[{#2}]}}
\newcommand{\undef}{\mathbf{u}}
\newcommand{\plan}{\omega}
\newcommand{\BPE}{\textsc{Bounded Planning}}
\begin{document}

\maketitle

\begin{abstract}\sloppypar 
  The {\em propositional planning} problem is a notoriously difficult
  computational problem.  Downey et al.~(1999) initiated the
  parameterized analysis of planning (with plan length as the
  parameter) and B\"{a}ckstr\"{o}m et al.~(2012) picked up this line
  of research and provided an extensive parameterized analysis under
  various restrictions, leaving open only one stubborn case.  We
  continue this work and provide a full classification.  In
  particular, we show that the case when actions have no preconditions
  and at most $e$ postconditions is fixed-parameter tractable if
  $e\leq 2$ and W[1]-complete otherwise. We show fixed-parameter
  tractability by a reduction to a variant of the Steiner Tree
  problem; this problem has been shown fixed-parameter
  tractable by Guo et al.~(2007).  If a problem is fixed-parameter
  tractable, then it admits a polynomial-time self-reduction to
  instances whose input size is bounded by a function of the
  parameter, called the {\em kernel}.  For some problems, this
  function is even polynomial which has desirable computational
  implications. Recent research in parameterized complexity has
  focused on classifying fixed-parameter tractable problems on whether
  they admit polynomial kernels or not. We revisit all the previously
  obtained restrictions of planning that are fixed-parameter tractable
  and show that none of them admits a polynomial kernel unless the
  polynomial hierarchy collapses to its third level.

\end{abstract}

\longversion{
  \pagestyle{plain}
  \thispagestyle{empty}
}
\shortversion{
}

\section{Introduction}

The propositional planning problem has been the subject of intensive
study in knowledge representation, artificial intelligence and control
theory and is relevant for a large number of industrial
applications~\cite{GhallabNauTraverso04}.  The problem involves deciding
whether an \emph{initial state}---an $n$-vector over some set $D$–--can
be transformed into a \emph{goal state} via the application of {\em
  operators} each consisting of {\em preconditions} and {\em
  post-conditions} (or {\em effects}) stating the conditions that need to hold before the
operator can be applied and which conditions will hold after the
application of the operator, respectively.
It is
known that deciding whether an instance has a solution is
\PSPACE-complete, and it remains at least NP-hard under various
restrictions~\cite{Bylander94,BackstromNebel95}.
In view of this intrinsic difficulty of the problem, it is natural to
study it within the framework of Parameterized Complexity which offers
the more relaxed notion of \emph{fixed-parameter tractability} (FPT).  A
problem is fixed-parameter tractable if it can be solved in time
$f(k)n^{O(1)}$ where $f$ is an arbitrary function of the parameter and
$n$ is the input size. Indeed, already in a 1999 paper, Downey, Fellows
and Stege \cite{DowneyFellowsStege99} initiated the parameterized
analysis of propositional planning, taking the minimum number of steps
from the initial state to the goal state (i.e., the length of the
solution plan) as the parameter; this is also the parameter used
throughout this paper.  More recently, B\"{a}ckstr\"{o}m et
al.~\cite{BackstromChenJonssonOrdyniakSzeider12} picked up this line of
research and provided an extensive analysis of planning under various
syntactical restrictions, in particular the syntactical restrictions
considered by Bylander~\cite{Bylander94} and by B\"{a}ckstr\"{o}m and
Nebel~\cite{BackstromNebel95}, leaving open only one stubborn class of
problems  where
operators have no preconditions but may involve up to~$e$
postconditions (effects).
 
\subsection*{New Contributions}
We provide a full parameterized complexity analysis of propositional
planning without preconditions. In particular, we show the following
dichotomy:
\begin{enumerate}[label=(1),leftmargin=*]
\item Propositional planning where operators have no preconditions but may
  have up to~$e$ postconditions is fixed-parameter tractable for $e\le
  2$ and $\W{1}$\hy complete for $e>2$.
\end{enumerate}
$\W{1}$ is a parameterized complexity class of problems
that are believed to be not fixed-parameter tractable. Indeed, the
fixed-parameter tractability of a $\W{1}$-complete problem implies that
the Exponential Time Hypothesis
fails~\cite{ChenHuangKanjXia06,FlumGrohe06}.  We establish the hardness
part of the dichotomy (1) by a reduction from a variant of the
$k$-\textsc{Clique} problem.  The case $e=2$ is known to be $\NP$-hard~\cite{Bylander94}.
Its difficulty comes from the fact that possibly one of the two
postconditions might set a variable to its desired value, but the other
postcondition might change a variable from a desired value to an
undesired one. This can cause a chain of operators so that finally all
variables have their desired value. We show that this behaviour can be
modelled by means of a certain problem on Steiner trees in directed
graphs, which was recently shown to be fixed-parameter tractable by Guo,
Niedermeier and Suchy~\cite{GuoNiedermeierSuchy11}.  We would like to
point out that this case (0 preconditions, 2 postconditions) is the
only fixed-parameter tractable case among the NP-hard cases in Bylander's
system of restrictions (see Table~\ref{table:bylander}).
  
\medskip Our second set of results is concerned with bounds on problem
kernels for planning problems. It is known that a decidable problem is
fixed-parameter tractable if and only if it admits a polynomial-time
self-reduction where the size of the resulting instance is bounded by a
function $f$ of the parameter~\cite{Fellows06,GuoNiedermeier07,Fomin10}.
The function $f$ is called the \emph{kernel size}.  By providing upper
and lower bounds on the kernel size, one can rigorously establish the
potential of polynomial-time preprocessing for the problem at hand.
Some NP-hard combinatorial problems such as $k$-\textsc{Vertex Cover}
admit polynomially sized kernels, for others such as $k$-\textsc{Path}
an exponential kernel is the best one can hope
for~\cite{BodlaenderDowneyFellowsHermelin09}.  We examine all planning
problems that we have previously been shown to be fixed-parameter tractable
on whether they admit polynomial kernels. Our results are negative throughout.
In particular, it is unlikely that the FPT part in the above
dichotomy~(1) can be improved to a polynomial kernel:
\begin{enumerate}[label=(2),leftmargin=*]
\item Propositional planning where operators have no preconditions
  but may have up to~2 postconditions does not admit a polynomial
  kernel unless $\NOPOLYKERNEL$.
\end{enumerate}
Recall that by Yap's Theorem \cite{Yap83} $\NOPOLYKERNEL$ implies the
(unlikely) collapse of the Polynomial Hierarchy to its third level.  We
establish the kernel lower bound by means of the technique of
\emph{OR-compositions}~\cite{BodlaenderDowneyFellowsHermelin09}.  We
also consider the ``PUBS'' fragments of planning as introduced by
B\"{a}ckstr\"{o}m and Klein~\cite{BackstromKlein91}.  These fragments
arise under combinations of syntactical properties (postunique~(P),
unary~(U), Boolean~(B), and single-valued~(S); definitions are provided in
Section~\ref{sec:planning-framework}).

\begin{enumerate}[label=(3),leftmargin=*]
\item None of the fixed-parameter tractable but NP-hard PUBS
  restrictions of propositional planning admits a polynomial kernel,
  unless $\NOPOLYKERNEL$.
\end{enumerate}
\begin{table}[tb]
\centering
\shortversion{\small}
\begin{tabular}{@{}l@{~~~~~}llll@{}} 
  \toprule
  & $e=1$     & $e=2$ & fixed $e > 2$ ~~~ & arbitrary $e$\\
    \midrule
  $p=0$       & in \poly   & in \FPT${}^*$      & \W{1}-C${}^*$  & \W{2}-C${}$ \\ 
  & in \poly    & \NP-C      &\NP-C             & \NP-C \\ 
  \midrule
  $p=1$       & \W{1}-C    & \W{1}-C & \W{1}-C   & \W{2}-C \\
  & \NP-H       & \NP-H      &  \NP-H    & \PSPACE-C \\ 
    \midrule
  fixed $p > 1$ & \W{1}-C    & \W{1}-C & \W{1}-C   & \W{2}-C \\
  & \NP-H       & \PSPACE-C  & \PSPACE-C & \PSPACE-C \\ 
    \midrule
  arbitrary $p$    & \W{1}-C    & \W{1}-C & \W{1}-C   & \W{2}-C \\ 
  & \PSPACE-C   & \PSPACE-C  & \PSPACE-C   & \PSPACE-C \\ 
  \bottomrule
  \end{tabular}
\medskip

  \caption{Complexity of \BPE,
    restricting the number of preconditions ($p$) and 
    effects ($e$).
    The problems in FPT do not admit polynomial kernels.
    Results marked with * are obtained in this paper. All other
    parameterized results are
    from~\cite{BackstromChenJonssonOrdyniakSzeider12} and all
    classical results are
    from~\cite{Bylander94}.
}

    \label{table:bylander}
\end{table}
According to the PUBS lattice (see Figure~\ref{fig:pubs-lattice}),
only the two maximal restrictions PUB and PBS need to be considered.
Moreover, we observe from previous results that a polynomial kernel
for restriction PBS implies one for restriction PUB. Hence this leaves
restriction PUB as the only one for which we need to show a
super-polynomial kernel bound. We establish the latter, as above, by 
using  OR-compositions.

\begin{figure}[thb]
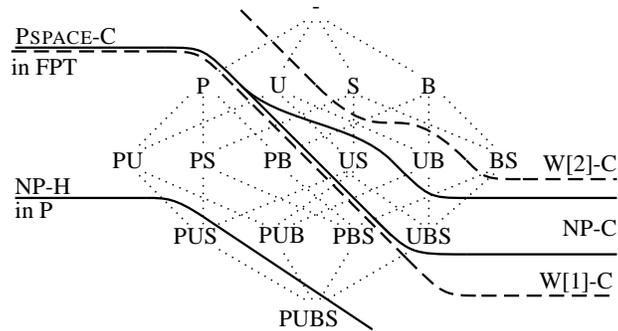

\centering
  \begin{pgfpicture}{-2cm}{-3cm}{2cm}{2cm}
    % Set scale
    \newdimen\pubsdim
    \pgfextractx{\pubsdim}{\pgfpoint{5mm}{0mm}}
    \pgfsetxvec{\pgfpoint{5mm}{0mm}}
    \pgfsetyvec{\pgfpoint{0mm}{5mm}}

    \small
    \pgfsetlinewidth{0.2mm}
    \pgfsetdash{{0.2mm}{0.8mm}}{0mm}

    % Top element
    \pgfputat{\pgfxy(0,4)}{\pgfbox[center,center]{-}}

    % Lines from top
    \pgfline{\pgfxy(-0.3,3.7)}{\pgfxy(-2.7,2.3)} % T-P
    \pgfline{\pgfxy(-0.1,3.7)}{\pgfxy(-0.9,2.4)} % T-U
    \pgfline{\pgfxy(0.1,3.7)}{\pgfxy(0.9,2.4)}   % T-S
    \pgfline{\pgfxy(0.3,3.7)}{\pgfxy(2.7,2.3)}   % T-B

    % Level 1
    \pgfputat{\pgfxy(-3,2)}{\pgfbox[center,center]{P}}
    \pgfputat{\pgfxy(-1,2)}{\pgfbox[center,center]{U}}
    \pgfputat{\pgfxy(1,2)}{\pgfbox[center,center]{S}}
    \pgfputat{\pgfxy(3,2)}{\pgfbox[center,center]{B}}

    % Lines from P
    \pgfline{\pgfxy(-3.4,1.7)}{\pgfxy(-4.8,0.3)} % P-PU
    \pgfline{\pgfxy(-3.0,1.7)}{\pgfxy(-3.0,0.3)} % P-PS
    \pgfline{\pgfxy(-2.6,1.7)}{\pgfxy(-1.2,0.3)} % P-PB

    % Lines from U
    \pgfline{\pgfxy(-1.4,1.7)}{\pgfxy(-4.6,0.3)} % U-PU
    \pgfline{\pgfxy(-0.9,1.7)}{\pgfxy(0.6,0.3)}  % U-US
    \pgfline{\pgfxy(-0.6,1.7)}{\pgfxy(2.6,0.3)}  % U-UB

    % Lines from S
    \pgfline{\pgfxy(0.8,1.7)}{\pgfxy(-2.6,0.3)}  % S-PS
    \pgfline{\pgfxy(1.0,1.7)}{\pgfxy(1.0,0.3)}   % S-US
    \pgfline{\pgfxy(1.2,1.7)}{\pgfxy(4.6,0.3)}   % S-BS

    % Lines from B
    \pgfline{\pgfxy(2.8,1.7)}{\pgfxy(-0.6,0.3)}  % B-PB
    \pgfline{\pgfxy(3.0,1.7)}{\pgfxy(3.0,0.3)}   % B-UB
    \pgfline{\pgfxy(3.4,1.7)}{\pgfxy(4.8,0.3)}   % B-BS

    % Level 2
    \pgfputat{\pgfxy(-5,0)}{\pgfbox[center,center]{PU}}
    \pgfputat{\pgfxy(-3,0)}{\pgfbox[center,center]{PS}}
    \pgfputat{\pgfxy(-1,0)}{\pgfbox[center,center]{PB}}
    \pgfputat{\pgfxy(1,0)}{\pgfbox[center,center]{US}}
    \pgfputat{\pgfxy(3,0)}{\pgfbox[center,center]{UB}}
    \pgfputat{\pgfxy(5,0)}{\pgfbox[center,center]{BS}}
 
    % Lines from PUS
    \pgfline{\pgfxy(-3.4,-1.7)}{\pgfxy(-4.8,-0.3)} % PUS-PU
    \pgfline{\pgfxy(-3.0,-1.7)}{\pgfxy(-3.0,-0.3)} % PUS-PS
    \pgfline{\pgfxy(-2.6,-1.7)}{\pgfxy(0.8,-0.3)} % PUS-US

    % Lines from PUB
    \pgfline{\pgfxy(-1.4,-1.7)}{\pgfxy(-4.6,-0.3)} % PUB-PU
    \pgfline{\pgfxy(-1.0,-1.7)}{\pgfxy(1.0,-0.3)}  % PUB-PB
    \pgfline{\pgfxy(-0.6,-1.7)}{\pgfxy(2.6,-0.3)}  % PUB-UB

    % Lines from PBS
    \pgfline{\pgfxy(0.6,-1.7)}{\pgfxy(-2.6,-0.3)}  % PBS-PS
    \pgfline{\pgfxy(0.9,-1.7)}{\pgfxy(-0.8,-0.3)}  % PBS-PB
    \pgfline{\pgfxy(1.2,-1.7)}{\pgfxy(4.6,-0.3)}   % PBS-BS

    % Lines from UBS
    \pgfline{\pgfxy(2.8,-1.7)}{\pgfxy(0.8,-0.3)}  % UBS-US
    \pgfline{\pgfxy(3.0,-1.7)}{\pgfxy(3.0,-0.3)}  % UBS-UB
    \pgfline{\pgfxy(3.2,-1.7)}{\pgfxy(4.8,-0.3)}  % UBS-BS

    % Level 3
    \pgfputat{\pgfxy(-3.2,-2)}{\pgfbox[center,center]{PUS}}
    \pgfputat{\pgfxy(-0.9,-2)}{\pgfbox[center,center]{PUB}}
    \pgfputat{\pgfxy(1,-2)}{\pgfbox[center,center]{PBS}}
    \pgfputat{\pgfxy(3,-2)}{\pgfbox[center,center]{UBS}}

    % Lines from PUBS
    \pgfline{\pgfxy(-0.3,-3.7)}{\pgfxy(-2.7,-2.3)} % T-P
    \pgfline{\pgfxy(-0.1,-3.7)}{\pgfxy(-0.9,-2.4)} % T-U
    \pgfline{\pgfxy(0.1,-3.7)}{\pgfxy(0.9,-2.4)}   % T-S
    \pgfline{\pgfxy(0.3,-3.7)}{\pgfxy(2.7,-2.3)}   % T-B

    % Bottom
    \pgfputat{\pgfxy(-0.2,-4.2)}{\pgfbox[center,center]{PUBS}}

    %% Old standard complexity results %%

    %% Line style for standard complexity results
    \pgfsetlinewidth{0.3mm}
    \pgfsetdash{}{0mm}

    % Separation for P
    \pgfxyline(-8.0,-1.0)(-4.5,-1.0)
    \pgfxycurve(-4.5,-1.0)(-4.0,-1.0)(-3.75,-1.0)(-3.0,-1.5)
    \pgfxyline(-3.0,-1.5)(1.5,-4.5)
    \pgfstroke
    \pgfputat{\pgfxy(-8.0,-1.1)}{\pgfbox[left,top]{in \poly}}

    % Separation for strongly NP-hard
    \pgfxyline(-8.0,3.0)(-4.0,3.0)
    \pgfxycurve(-4.0,3.0)(-3.0,3.0)(-3.0,3.0)(-2.0,2.0)
    \pgfxyline(-2.0,2.0)(2.0,-2.0)
    \pgfxycurve(1.5,-1.5)(2.5,-2.5)(2.5,-2.5)(4.0,-2.5)
    \pgfxyline(4.0,-2.5)(8.0,-2.5)
    \pgfputat{\pgfxy(-8.0,-0.9)}{\pgfbox[left,bottom]{\NP-H}}
    \pgfputat{\pgfxy(8.0,-1.8)}{\pgfbox[right,center]{\NP-C}}
    % Separation for PSPACE
    % (forks off from previous curve at (-2.0,2.0))
    \pgfxycurve(-2.0,2.0)(-1.0,1.0)(1.0,1.0)(2.0,0.0)
    \pgfxycurve(2.0,0.0)(3.0,-1.0)(3.0,-1.0)(4.0,-1.0)
    \pgfxyline(4.0,-1.0)(8.0,-1.0)
    \pgfstroke
    \pgfputat{\pgfxy(-8.0,3.1)}{\pgfbox[left,bottom]{\PSPACE-C}}
    %% Line style for parameterized separations
    \pgfsetlinewidth{0.3mm}
    \pgfsetdash{{2mm}{1mm}}{0mm}

    % FPT-W[1] separation
    \begin{pgftranslate}{\pgfxy(-0.1,-0.1)}
      \pgfxyline(-8.0,3.0)(-4.0,3.0)
      \pgfxycurve(-4.0,3.0)(-3.0,3.0)(-3.0,3.0)(-2.0,2.0)
      \pgfxyline(-2.0,2.0)(2.5,-2.5)
      \pgfxycurve(2.5,-2.5)(3.5,-3.5)(3.5,-3.5)(4.5,-3.5)
      \pgfxyline(4.5,-3.5)(8.0,-3.5)
      \pgfstroke
      \pgfputat{\pgfxy(-8.0,2.8)}{\pgfbox[left,top]{in \FPT}}
      \pgfputat{\pgfxy(8.0,-3.4)}{\pgfbox[right,bottom]{\W{1}-C}}
    \end{pgftranslate}

    % W[1]-W[2] separation
    \begin{pgftranslate}{\pgfxy(0.0,0.0)}
      % \pgfxyline(-7.0,3.0)(-4.0,3.0)
      \pgfxyline(-2.0,4.0)(0.0,2.0)
      \pgfxycurve(0.0,2.0)(2.0,0.0)(2.0,2.0)(4.0,0.0)
      \pgfxycurve(4.0,0.0)(4.5,-0.5)(4.5,-0.5)(5.5,-0.5)
      \pgfxyline(5.5,-0.5)(8.0,-0.5)
      \pgfstroke
      \pgfputat{\pgfxy(8.0,-0.4)}{\pgfbox[right,bottom]{\W{2}-C}}
    \end{pgftranslate}
  \end{pgfpicture}
  \caption{Complexity of \BPE\ for the restrictions P, U, B and S illustrated as a lattice defined by all
    possible combinations of these restrictions~\cite{BackstromChenJonssonOrdyniakSzeider12}.
    As shown in this paper, PUS and PUBS are
    the only restrictions that admit a polynomial kernel, unless the
    Polynomial Hierarchy collapses. }
  \label{fig:pubs-lattice}
\end{figure}

\shortversion{The full proofs of statements marked with $\star
$
  are omitted due to space restrictions and can be found at \url{http://arxiv.org/abs/1211.0479}.}

\section{Parameterized Complexity}

We define the basic notions of Parameterized Complexity and
refer to other sources~\cite{DowneyFellows99,FlumGrohe06} 
for an in-depth treatment. 
A \emph{parameterized problem} is a set of pairs 
$\tuple{\insti,k}$,
the \emph{instances}, where $\insti$ is the main part and $k$ 
the \emph{parameter}. The parameter is usually a non-negative integer.
A parameterized problem is \emph{fixed-parameter tractable (FPT)} if
there exists an algorithm that solves any instance $\tuple{\insti,k}$ of
size $n$ in time $f(k)n^{c}$ where $f$ is an arbitrary computable
function and $c$ is a constant independent of both $n$ and $k$. 
\FPT\ is the class of all fixed-parameter
tractable decision problems.

Parameterized complexity offers a completeness theory, similar
to the theory of NP-completeness, that allows the accumulation of
strong theoretical evidence that some parameterized problems
are not fixed-parameter tractable. This theory is based on a
hierarchy of complexity classes
$\FPT \subseteq \W{1} \subseteq \W{2} \subseteq \cdots$
where all inclusions are believed to be strict. 
An \emph{fpt-reduction} from a parameterized problem $P$ to a
parameterized problem $Q$
if is a mapping $R$ from instances of $P$ to instances of $Q$ such
that
(i)~$\tuple{\insti,k}$ is a {\sc Yes}-instance of $P$ if and only if $\tuple{\insti',k'}=R(\insti,k)$
is a {\sc Yes}-instance of $Q$,
(ii)~there is a computable function $g$ such that $k' \leq g(k)$, and
(iii) there is a computable function $f$ and a constant $c$ such that $R$ can be
computed in time $O(f(k) \cdot n^c)$, where $n$ denotes the size
of $\tuple{\insti,k}$.

A \emph{kernelization}~\cite{FlumGrohe06} for a parameterized
problem~$P$ is an algorithm that takes an instance $\tuple{\insti,k}$ of
$P$ and maps it in time polynomial in $\Card{\insti}+k$ to an instance
$\tuple{\insti',k'}$ of $P$ such that $\tuple{\insti,k}$ is a
\textsc{Yes}-instance if and only if $\tuple{\insti',k'}$ is a
\textsc{Yes}-instance and $|\insti'|$ is bounded by some function
$f$~of~$k$.  The output $\insti'$ is called a \emph{kernel}. We say $P$
has a \emph{polynomial kernel} if $f$ is a polynomial.  Every
fixed-parameter tractable problem admits a kernel, but not necessarily a
polynomial kernel.

An \emph{OR-composition algorithm} for a parameterized problem $P$ maps
$t$ instances $\tuple{\insti_1,k},\dotsc,\tuple{\insti_t,k}$ of $P$ to
one instance $\tuple{\insti',k'}$ of $P$ such that the algorithm runs in
time polynomial in $\sum_{1 \leq i \leq t}|\insti_i|+k$, the parameter
$k'$ is bounded by a polynomial in the parameter $k$, and
$\tuple{\insti',k'}$ is a \textsc{Yes}-instance if and only if there is
an $1 \leq i \leq t$ such that $\tuple{\insti_i,k}$ is a
\textsc{Yes}-instance.
  
\begin{PRO}[Bodlaender, et al.~\cite{BodlaenderDowneyFellowsHermelin09}]\label{pro:or-comp-no-poly-kernel}
  If a parameterized problem $P$ has an OR\hy composition algorithm,
  then it has no polynomial kernel unless $\NOPOLYKERNEL$.
\end{PRO}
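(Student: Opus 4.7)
The plan is to prove the contrapositive: assume that $P$ admits both an OR\hy composition algorithm and a polynomial kernelization of size bound $p(k)$, and derive $\NOPOLYKERNEL$. The overall strategy is to combine the two assumed algorithms into an \emph{OR\hy distillation} of $P$, that is, a polynomial\hy time procedure which maps $t$ instances of total size $N$ to a single instance of $P$ whose size is bounded by a polynomial in $\max_i |\insti_i|$ alone (independent of $t$) and which is a \textsc{Yes}-instance iff some input $\tuple{\insti_i,k}$ is. Once such a distillation is in hand, one invokes the Fortnow--Santhanam complementary\hy witness lemma to conclude the collapse.

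To build the distillation, I would first apply a standard bucketing step: group the $t$ input instances by size (and by parameter value) so that, for $t$ large enough, one may assume all inputs share a common size $n$ and a common parameter $k$; any instance whose parameter $k$ exceeds $n^c$ for a suitable constant $c$ can be solved outright in polynomial time, so we may assume $k\leq n^c$. Now feed the bucket to the OR\hy composition algorithm, obtaining a single instance $\tuple{\insti',k'}$ with $k'\leq q(k)$ for some fixed polynomial~$q$ guaranteed by the definition of OR\hy composition. Finally apply the hypothetical polynomial kernelization to $\tuple{\insti',k'}$ to shrink the output to size at most $p(q(k))$, a bound that is polynomial in $n$ and completely independent of the number $t$ of inputs. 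The composition of these three polynomial\hy time steps is the claimed OR\hy distillation.

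The main obstacle, and the deepest ingredient of the proof, is the step that turns an OR\hy distillation into the collapse $\NOPOLYKERNEL$. This is the Fortnow--Santhanam theorem: no NP\hy hard language can have an OR\hy distillation of polynomial size unless $\NOPOLYKERNEL$. Its proof is an information\hy theoretic argument showing that an NP\hy hard distillation would yield a short co\hy nondeterministic advice string for the complement of the distilled language, contradicting known advice lower bounds. I would invoke this theorem as a black box, after checking that the underlying unparameterized version of $P$ is NP\hy hard (which is the implicit hypothesis in the Bodlaender et al.\ framework, and can be ensured by a trivial padding reduction when needed).

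A small amount of care is needed at the boundaries: the polynomial bound on $k'$ must be explicit so that $p(q(k))$ really is polynomial in $k$; the bucketing must preserve the OR semantics and the parameter; and the case of very small~$t$ (where distillation is vacuous) must be disposed of separately. These are routine, and the real content of the argument is the two\hy step construction of the distillation and the appeal to the Fortnow--Santhanam bound.
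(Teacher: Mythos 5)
The paper does not actually prove this proposition: it is imported by citation from Bodlaender et al.~\cite{BodlaenderDowneyFellowsHermelin09}, so there is no in-paper argument to compare against. Your sketch reproduces the standard proof from that source --- an OR-composition followed by a polynomial kernelization yields an OR-distillation of the NP-hard unparameterized problem $\tilde{P}$, and the Fortnow--Santhanam theorem then gives $\NOPOLYKERNEL$ --- and this is the right route. Two points need tightening. First, the bucketing step is stated too strongly: you cannot ``assume all inputs share a common size $n$ and a common parameter $k$,'' because the unique \textsc{Yes}-instance may lie in any bucket. The correct move is to observe that, since $k$ is encoded in unary in $\tilde{P}$, the $t$ inputs realize at most $\max_i \Card{\insti_i}$ distinct parameter values, to run composition-plus-kernelization separately on \emph{each} parameter class, and then to combine the polynomially many resulting instances (each of size polynomial in $\max_i \Card{\insti_i}$) into a single OR-instance, e.g.\ by reducing each to \textsc{Sat} using membership of $\tilde{P}$ in \NP{} and taking the disjunction, or by taking the tuple of outputs as an instance of an auxiliary OR-language. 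This final combination step is absent from your write-up, and it is exactly where the distillation's output is defined and where the polynomial bound on the number of classes is used. Second, the claim that instances with $k > n^{c}$ ``can be solved outright in polynomial time'' is neither true in general nor needed; the unary encoding of $k$ already bounds the number of parameter classes. You are right that NP-hardness of $\tilde{P}$ is an implicit hypothesis of the proposition as stated here; with the above repairs your argument is the intended one.
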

A \emph{polynomial parameter reduction} from a parameterized problem $P$
to a parameterized problem $Q$ is an fpt-reduction $R$ from $P$ to $Q$
such that (i)~$R$ can be computed in polynomial time (polynomial in
$\Card{\insti}+k)$, and (ii) there is a polynomial $p$ such that $k'\leq
p(k)$ for every instance $\tuple{\insti,k}$ of $P$ with
$\tuple{\insti',k'}=R(\tuple{\insti,k})$.  The \emph{unparameterized
  version} $\tilde{P}$ of a parameterized problem $P$ has the same
\textsc{YES} and \textsc{NO}-instances as $P$, except that the parameter
$k$ is given in unary~$1^k$.

\begin{PRO}[Bodlaender, Thomasse, and Yeo~\cite{BodlaenderThomasseYeo09}]\label{pro:poly-par-reduction}
  Let $P$ and $Q$ be two parameterized problems such that there is a
  polynomial parameter reduction from $P$ to $Q$, and assume that
  $\tilde{P}$ is \NP-complete and $\tilde{Q}$ is in \NP{}. Then, if $Q$
  has a polynomial kernel also $P$ has a polynomial kernel.
\end{PRO}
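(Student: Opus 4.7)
The plan is to chain the polynomial parameter reduction with the polynomial kernelization for $Q$, and then transport the resulting small $Q$-instance back to a small $P$-instance through a classical polynomial-time many-one reduction supplied by the $\NP$-hypotheses. Let $R$ be the polynomial parameter reduction from $P$ to $Q$ with parameter bound $k'\leq p(k)$ for a polynomial $p$, and let $K$ be the assumed polynomial kernelization for $Q$ with size bound $q$. Given an instance $\tuple{\insti,k}$ of $P$, I first apply $R$ in polynomial time to obtain an equivalent instance $\tuple{\insti',k'}$ of $Q$ with $k'\leq p(k)$, and then apply $K$ to obtain an equivalent $Q$-instance $\tuple{\insti'',k''}$ whose total size $\Card{\insti''}+k''$ is at most $q(k')\leq q(p(k))$, hence polynomial in $k$.

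At this point I have a small equivalent instance, but of $Q$ rather than of $P$. To cross back, I view $(\insti'',1^{k''})$ as an instance of $\tilde{Q}$; its encoding length is polynomial in $k$. Since $\tilde{Q}\in\NP$ and $\tilde{P}$ is $\NP$-complete, there exists a polynomial-time many-one reduction $T$ from $\tilde{Q}$ to $\tilde{P}$. Applying $T$ yields an instance $(\insti''',1^{k'''})$ of $\tilde{P}$ whose overall length is polynomial in $\Card{\insti''}+k''$, and therefore polynomial in $k$. Stripping the unary encoding gives an instance $\tuple{\insti''',k'''}$ of $P$ with $\Card{\insti'''}+k'''$ polynomial in $k$. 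A straightforward chain of equivalences shows that $\tuple{\insti''',k'''}$ is a \textsc{Yes}-instance of $P$ if and only if $\tuple{\insti,k}$ is, and the whole procedure runs in polynomial time. This is exactly a polynomial kernelization for $P$.

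The main obstacle, and the reason both $\NP$-hypotheses are needed, is the transport step from $Q$ back to $P$: a polynomial parameter reduction goes only one way, and a kernel for $Q$ by itself produces objects of the wrong type. The trick is to read the $Q$-kernel output as an $\tilde{Q}$-instance of polynomial size, use $\NP$-completeness of $\tilde{P}$ to translate it to $\tilde{P}$, and crucially exploit the fact that in $\tilde{P}$ the parameter is written in \emph{unary}, so the new parameter $k'''$ is automatically bounded by the length of the image of $T$, which is polynomial in $k$. If the parameter were encoded in binary, this final bound on $k'''$ would fail and the argument would collapse; the unary convention in the definition of $\tilde{P}$ is precisely what makes the parameter transfer polynomial.
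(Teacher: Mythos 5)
Your argument is correct and is precisely the standard proof of this proposition from Bodlaender, Thomass\'e and Yeo; the paper itself states the result as a cited proposition without proof, and your chain (polynomial parameter reduction, then the $Q$-kernel, then a classical many-one reduction from $\tilde{Q}$ to $\tilde{P}$ exploiting the unary parameter encoding) is exactly the intended route. The only point worth flagging is that you implicitly use the standard convention that a kernelization also bounds the \emph{output parameter} $k''$ by a function of $k$ (the paper's definition only explicitly bounds $\Card{\insti''}$), which is needed so that $1^{k''}$ has polynomial length before applying the reduction $T$.
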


\section{Planning Framework}\label{sec:planning-framework}

We will now introduce the \sasplus\ formalism for specifying 
propositional planning problems~\cite{BackstromNebel95}.
We note that the propositional \strips\ language can be treated as 
the special case of \sasplus\ satisfying restriction B (which will be
defined below).
More precisely, this corresponds to the variant of \strips\ 
that allows negative preconditions; this formalism is often referred
to as {\sc Psn}.

Let $\vars = \set{v_1,\ldots,v_n}$ be a finite set of
\emph{variables} over a finite \emph{domain} $\dom$.
Implicitly define $\dom^+ = \dom \union \set{\undef}$,
where  $\undef$ is a special value (the \emph{undefined value}) not present in $\dom$.
Then  $\dom^n$ is the set of \emph{total states}
and $(\dom^+)^n$ is the set of \emph{partial states}
over $\vars$ and $\dom$, where $\dom^n \subseteq (\dom^+)^n$.
The value of a variable $v$ in a state $s \in (\dom^+)^n$
is denoted $\proj{s}{v}$.
A \emph{\sasplus\ instance} is a tuple
$\iplan = \tuple{\vars,\dom,\acts,\init,\goal}$
where  $\vars$ is a set of variables, 
$\dom$ is a domain,
$\acts$ is a set of \emph{actions},
$\init \in \dom^n$ is the \emph{initial state}
and $\goal \in (\dom^+)^n$ is the \emph{goal}. %,where $n = \card{\vars}$.
Each action $a \in \acts$ has 
a \emph{precondition} $\pre(a) \in (\dom^+)^n$ and
an \emph{effect} $\eff(a) \in (\dom^+)^n$.
We will frequently use the convention that a variable has value $\undef$
in a precondition/effect unless a value is explicitly specified.
Let $a \in \acts$ and let $s \in \dom^n$.
Then $a$ is \emph{valid in $s$} if for all $v \in \vars$,
either $\proj{\pre(a)}{v} = \proj{s}{v}$ or $\proj{\pre(a)}{v} = \undef$.
Furthermore, the \emph{result of $a$ in $s$} is a state  $t \in \dom^n$
defined such that for all $v \in \vars$,
 $\proj{t}{v} = \proj{\eff(a)}{v}$ if $\proj{\eff(a)}{v} \neq \undef$
and $\proj{t}{v} = \proj{s}{v}$ otherwise.

Let $s_0, s_\ell \in \dom^n$ and 
let $\plan = \seq{a_1,\ldots,a_\ell}$ be a sequence of actions.
Then $\plan$ is a \emph{plan from $s_0$ to $s_\ell$} if
either 
(i)~$\plan = \seq{}$ and $\ell = 0$ or
(ii)~there are states $s_1,\ldots,s_{\ell-1} \in \dom^n$
such that for all $i$, where $1 \leq i \leq \ell$,
$a_i$ is valid in $s_{i-1}$ and $s_i$ is the result of $a_i$ in $s_{i-1}$.
A state $s \in \dom^n$ is a \emph{goal state}
if for all $v \in \vars$,
either $\proj{\goal}{v} = \proj{s}{v}$ or
$\proj{\goal}{v} = \undef$.
An action sequence $\plan$ is a \emph{plan for $\iplan$} if
it is a plan from~$\init$ to some goal state 
$s \in \dom^n$.
We will study the following problem:

\smallskip

\begin{quote}
\noindent
\BPE\\
\textit{Instance:}
A tuple $\tuple{\iplan,k}$ where $\iplan$ is a \sasplus\ 
instance and $k$ is a positive integer.\\
\textit{Parameter:}
The integer $k$.\\
\textit{Question:}
Does $\iplan$ have a plan of length at most $k$?
\end{quote}

\smallskip

\noindent
We will consider the following four syntactical restrictions,
originally defined by B\"{a}ckstr\"{o}m and Klein
\cite{BackstromKlein91}.
\begin{quote}
\begin{description}
  \item[P] (postunique):
    For each $v \in \vars$ and each $x \in \dom$
    there is at most one $a \in \acts$ such that
    $\proj{\eff(a)}{v} = x$.
  \item[U] (unary):
    For each $a \in \acts$,  $\proj{\eff(a)}{v} \neq \undef$ 
    for exactly one $v \in \vars$.
  \item[B] (Boolean):
    $\card{\dom} = 2$.
  \item[S] (single-valued):
    For all $a,b \in \acts$ and all $v \in \vars$,
    if 
    $\proj{\pre(a)}{v} \neq \undef$,
    $\proj{\pre(b)}{v} \neq \undef$ and
    $\proj{\eff(a)}{v} = \proj{\eff(b)}{v} = \undef$,
    then $\proj{\pre(a)}{v} = \proj{\pre(b)}{v}$.
\end{description}
\end{quote}

For any set $R$ of such restrictions we write \mbox{$R$-\BPE} to denote
the restriction of \BPE\ to only instances satisfying the restrictions
in $R$.  Additionally we will consider restrictions on the number of
preconditions and effects as previously considered
in~\cite{Bylander94}. For two non-negative integers $p$ and $e$ we write
\mbox{$(p,e)$-\BPE} to denote the restriction of \BPE{} to only
instances where every action has at most $p$ preconditions and at most
$e$ effects. Table~\ref{table:bylander} and
Figure~\ref{fig:pubs-lattice} summarize results from
\cite{Bylander94,BackstromNebel95,BackstromChenJonssonOrdyniakSzeider12}
combined with the results presented in this paper.

\section{Parameterized  Complexity of $(0,e)$-\BPE{}}

\sloppypar In this section we completely characterize the parameterized complexity
of \BPE{} for planning instances without preconditions. It is
known~\cite{BackstromChenJonssonOrdyniakSzeider12} that \BPE{} without
preconditions is contained in the parameterized complexity class
\W{1}. Here we show that $(0,e)$-\BPE{} is also \W{1}-hard for every $e
> 2$ but it becomes fixed-parameter tractable if $e \leq 2$.  Because
$(0,1)$-\BPE{} is trivially solvable in polynomial time this completely
characterized the parameterized complexity of \BPE{} without
preconditions.

\subsection{Hardness Results}
%In this section we establish the hardness results.
\begin{THE}
  $(0,3)$-\BPE{} is $\W{1}$-hard.
\end{THE}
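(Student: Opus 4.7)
The plan is to prove W[1]-hardness by an fpt-reduction from Multicolored Clique, a well-known W[1]-hard variant of $k$-\textsc{Clique}: given a graph $G=(V,E)$ with a vertex partition $V = V_1 \sqcup \cdots \sqcup V_k$, decide whether $G$ contains a clique with exactly one vertex from each color class.

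Given such an instance I would construct a $(0,3)$-\BPE{} instance over a Boolean domain using the following variables, all initialized to $0$: one variable $E_{ij}$ for each pair $1\leq i<j\leq k$ with goal value $1$, and one variable $W_{i,u}$ for each color $i$ and vertex $u\in V_i$ with goal value $0$. For every edge $\{u,w\}\in E$ with $u\in V_i$, $w\in V_j$, $i<j$, I introduce an \emph{edge action} with the three effects $E_{ij}:=1$, $W_{i,u}:=1$, $W_{j,w}:=1$. For every color $i$ and vertex $u\in V_i$ I introduce a one-effect \emph{reset action} $r_{i,u}$ that sets $W_{i,u}:=0$. The plan-length parameter is $k'=\binom{k}{2}+k$, which is polynomial in $k$, so this is in fact a polynomial parameter reduction.

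The correctness argument exploits the fact that, since no action has preconditions, the final value of a variable depends only on the last action that affects it. For the forward direction, if $\{u_1,\ldots,u_k\}$ is a multicolored clique I apply the $\binom{k}{2}$ edge actions for the clique edges and then the $k$ resets $r_{i,u_i}$, reaching the goal in exactly $k'$ steps. The core of the reduction is the converse: any plan must use at least $\binom{k}{2}$ edge actions (one per pair, to switch each $E_{ij}$ to $1$), and for every color $i$ and every distinct vertex $u\in V_i$ that is touched by some chosen edge action it must also contain a reset $r_{i,u}$ scheduled afterwards. Writing $U_i\subseteq V_i$ for the set of vertices of color $i$ appearing in chosen edge actions, the plan thus has length at least $\binom{k}{2}+\sum_i |U_i|\geq \binom{k}{2}+k$, with equality forcing $|U_i|=1$ for every $i$. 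This last condition means that all chosen edges at color $i$ share a common endpoint $u_i$, so the $\binom{k}{2}$ chosen edges form the desired multicolored clique on $\{u_1,\ldots,u_k\}$.

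The technical heart I would need to verify carefully is this final counting step; in particular I must rule out ``savings'' by repeating edge actions or by using extra resets. Both points are straightforward: duplicate edge actions do not reduce any $|U_i|$ but still add to the plan length, and extra resets only waste length. The scheduling requirement---that each needed reset $r_{i,u}$ come after every edge action touching $W_{i,u}$---is trivially met by placing all edge actions before all resets. I therefore do not anticipate any serious obstacle once the construction is fixed; the cleverness is entirely in arranging that the ``reset cost'' $\sum_i|U_i|$ meets its floor $k$ precisely on clique-consistent edge selections.
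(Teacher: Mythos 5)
Your proposal is correct and is essentially the paper's own proof: the same reduction from \textsc{Multicolored Clique}, with your $E_{ij}$ and $W_{i,u}$ corresponding to the paper's $p_{i,j}$ and vertex variables, the same three-effect edge actions and single-effect reset actions, and the same budget $k'=\binom{k}{2}+k$ driving an identical counting argument for the converse direction.
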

\begin{proof}
  We devise a parameterized reduction from the following problem, which
  is $\W{1}$-complete~\cite{Pietrzak03}.

  \begin{quote}
    \textsc{Multicolored Clique}
    
    \emph{Instance:} A $k$\hy partite graph $G=(V,E)$ with partition
    $V_1,\dots,V_k$ such that $\Card{V_i}=\Card{V_j}=n$ for $1\leq i<j
    \leq k$.
    
    \emph{Parameter:} The integer $k$.  

    \emph{Question:} Are there vertices $v_1,\dots,v_k$ such that
    $v_i\in V_i$ for $1\leq i \leq k$ and $\{v_i,v_j\}\in E$ for $1\leq
    i < j \leq k$? (The graph $K=(\{v_1,\dots,v_k\},\SB \{v_i,v_j\} \SM
    1\leq i < j \leq k\SE)$ is a \emph{$k$-clique} of $G$.)
  \end{quote}
  Let $\insti=(G,k)$ be an instance of this problem with partition
  $V_1,\dots,V_k$, $\Card{V_1}=\dots=\Card{V_k}=n$ and parameter $k$.
  We construct a $(0,3)$-\BPE{} instance
  $\insti'=(\iplan',k')$ with
  $\iplan'=\tuple{\vars',\dom',\acts',\init',\goal'}$ such that
  $\insti$ is a \textsc{Yes}-instance if and only if so is $\insti'$.

  We set $\vars'=V(G) \cup \SB p_{i,j} \SM 1
  \leq i < j \leq k \SE$, $\dom'=\{0,1\}$, $\init'=\tuple{0,\ldots,0}$,
  $\proj{\goal'}{p_{i,j}}=1$ for every $1 \leq i < j \leq
  k$ and $\proj{\goal'}{v}=0$ for every $v \in V(G)$. 
  Furthermore, the set $\acts'$ contains the following actions:
  \begin{itemize}
  \item For every $v \in V(G)$ one action $a_v$ with $\proj{\eff(a_v)}{v}=0$;
  \item For every $e=\{v_i,v_j\} \in E(G)$ with $v_i \in V_i$ and $v_j
    \in V_j$ one action $a_e$ with $\proj{\eff(a_e)}{v_i}=1$,
    $\proj{\eff(a_e)}{v_j}=1$, and $\proj{\eff(a_e)}{p_{i,j}}=1$. 
  \end{itemize}
  Clearly, every action in $\acts'$ has no precondition and at most
  $3$ effects.  

  The theorem will follow after we have shown the that
  $G$ contains a $k$-clique if and only if $\iplan$ has a plan of
  length at most $k'=\binom{k}{2}+k$.
  Suppose that $G$ contains a $k$-clique with vertices $v_1,\dots,v_k$
  and edges $e_1, \dots, e_{k''}$, $k''=\binom{k}{2}$.  Then
  $\plan'=\seq{a_{e_1},\dots,a_{e_{k''}},a_{v_1},\dots,a_{v_k}}$ is a
  plan of length $k'$ for $\iplan'$.
  For the reverse direction suppose that $\plan'$ is a plan of length
  at most $k'$ for $\iplan'$. Because $\proj{\init'}{p_{i,j}}=0 \neq
  \proj{\goal'}{p_{i,j}}=1$ the plan $\plan'$ has to contain at least
  one action $a_e$ where $e$ is an edge between a vertex in $V_i$ and
  a vertex in $V_j$ for every $1 \leq i < j \leq k$. Because
  $\proj{\eff(a_{e=\{v_i,v_j\}})}{v_i}=1 \neq \proj{\goal}{v_i}=0$ and
  $\proj{\eff(a_{e=\{v_i,v_j\}})}{v_j}=1 \neq \proj{\goal}{v_j}=0$ for
  every such edge $e$ it follows that $\plan'$ has to contain at least
  one action $a_{v}$ with $v \in V_i$ for every $1 \leq i \leq
  k$. Because $k'=\binom{k}{2}+k$ it follows that $\plan'$ contains
  exactly $\binom{k}{2}$ actions of the form $a_e$ for some edge $e
  \in E(G)$ and exactly $k$ actions of the form $a_v$ for some vertex
  $v \in V(G)$. It follows that the graph $K=(\SB v \SM a_v \in \plan
  \SE,\SB e \SM a_e \in \plan \SE)$ is a $k$-clique of~$G$.  
\end{proof}

\subsection{Fixed-Parameter Tractability}\label{sec:res-fpt}

Before we show that $(0,2)$-\BPE{} is fixed-parameter tractable we
need to introduce some notions and prove some simple properties of
$(0,2)$-\BPE{}.
Let $\iplan=\tuple{\vars,\dom,\acts,\init,\goal}$ be an instance of \BPE{}.
We say an action
$a \in \acts$ has an effect on some variable $v \in \vars$ if
$\proj{\eff(a)}{v}\neq \undv$, we call this effect \emph{good} if
furthermore $\proj{\eff(a)}{v}=\proj{\goal}{v}$ or
$\proj{\goal}{v}=\undv$ and we call the effect \emph{bad}
otherwise. We say an action $a \in \acts$ is \emph{good} if it has
only good effects, \emph{bad} if it has only bad effects, and
\emph{mixed} if it has at least one good and at least one bad effect.
Note that if a valid plan contains a bad action then this
action can always be removed without changing the validity of the plan.
Consequently, we only need to
consider good and mixed actions.
Furthermore, we denote by $B(\vars)$ the set of
variables $v \in \vars$ with $\proj{\goal}{v}\neq\undv$ and 
$\proj{\init}{v} \neq \proj{\goal}{v}$. 
 
The next lemma shows that we do not need to consider good actions with
more than~$1$ effect for $(0,2)$-\BPE{}. 
\shortversion{
\begin{LEM}[$\star$]\label{lem:nodoublegoodactions}
  Let $\insti=\tuple{\iplan,k}$ be an instance of $(0,2)$-\BPE{}. Then
  $\insti$ can be fpt-reduced to an instance
  $\insti'=\tuple{\iplan',k'}$ of $(0,2)$-\BPE{} where $k'=k(k+3)+1$ and
  no good action of $\insti'$ effects more than one variable.
\end{LEM}}%
\longversion{
\begin{LEM}\label{lem:nodoublegoodactions}
  Let $\insti=\tuple{\iplan,k}$ be an instance of $(0,2)$-\BPE{}. Then
  $\insti$ can be fpt-reduced to an instance
  $\insti'=\tuple{\iplan',k'}$ of $(0,2)$-\BPE{} where $k'=k(k+3)+1$ and
  no good action of $\insti'$ effects more than one variable.
\end{LEM}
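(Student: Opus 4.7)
The plan is to eliminate good two-effect actions by a splitting-style reduction. For every good action $a \in \acts$ with $\eff(a) = (v_1 = x_1, v_2 = x_2)$ I would remove $a$ from the action set and introduce two new single-effect good actions $a_1, a_2$ with $\eff(a_i) = (v_i = x_i)$, keeping $\vars$, $\dom$, $\init$, and $\goal$ unchanged. All remaining good actions of $\iplan$ already have a single effect by the definition of goodness, and all mixed actions are not good, so the resulting $\iplan'$ has the required structural property that no good action has more than one effect.

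For the forward direction I would take a plan $\plan$ for $\iplan$ of length at most $k$ and substitute each occurrence of a good two-effect action $a$ by the consecutive pair $\langle a_1, a_2\rangle$. Because actions in $(0,2)$-\BPE{} have empty precondition, each split action is applicable in any state, and the two split actions together reproduce the effect of $a$. A direct count of the at-most-factor-of-two blow-up yields a plan for $\iplan'$ of length safely below the target $k(k+3)+1$ for every $k \geq 1$.

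For the reverse direction, given a plan $\plan'$ for $\iplan'$ of length at most $k(k+3)+1$, I would first normalize $\plan'$ by removing duplicate actions, dropping mixed actions whose bad side-effect is not eventually repaired, and discarding any split action $a_i$ whose effect is not needed to fix a variable in $B(\vars)$. I would then translate the normalized $\plan'$ into a plan $\plan$ for $\iplan$ by replacing each surviving split action $a_i$ with the original $a$ and merging paired occurrences into a single use of $a$; this is valid because $a$ has empty precondition and its extra effect is always good. Combined with $\Card{B(\vars)} \leq 2k$ and the constraints imposed by the normal form, one then bounds the number of distinct original actions arising from the translation by $k$.

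The main obstacle is making the reverse step actually yield $|\plan| \leq k$ rather than merely $|\plan| \leq k(k+3)+1$. The naive splitting argument only tells us that the shortest plans in $\iplan$ and $\iplan'$ differ by at most a factor of two, which does not rule out $\iplan'$ admitting a plan of length close to $k(k+3)$ while $\iplan$ admits none of length at most $k$. To close this gap the construction most likely needs additional gadgetry---for instance auxiliary variables tying the two halves of a split together and bounding the number of independent mixed-action repair chains---so that the combinatorial overhead saturates exactly at $k(k+3)+1$. Designing this gadget and pinning down the precise arithmetic is the technical heart of the lemma.
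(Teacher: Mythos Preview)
Your proposal correctly identifies its own fatal gap: naive splitting gives only a factor-two correspondence between plan lengths in $\iplan$ and $\iplan'$, so a plan of length $k(k+3)+1$ in $\iplan'$ need not come from a plan of length~$\leq k$ in $\iplan$. The normalization steps you sketch (removing duplicates, dropping unneeded mixed actions) do not close this gap, because nothing in the simple split construction forces plans for $\iplan'$ to be long when plans for $\iplan$ are long. As you suspect, additional gadgetry is required --- but the gadget the paper uses is quite different from ``tying the two halves of a split together.''

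The paper's construction replaces \emph{every} action $a\in\acts$ (good or mixed, one effect or two) by a chain of $k+3$ new actions $a_1(a),\dots,a_{k+3}(a)$ threaded through fresh binary variables $v_1(a),\dots,v_{k+2}(a)$. Each $a_i(a)$ has one good effect on $v_{i-1}(a)$ and one bad effect on $v_i(a)$, so the chain must be traversed essentially in full: a minimality argument shows that if any $a_i(a)$ appears in a shortest plan for $\iplan'$ then at least $k+2$ of the chain actions do. Hence a plan of length at most $k'=k(k+3)+1$ for $\iplan'$ can ``use'' at most $k'/(k+2)<k+1$ original actions, which yields the bound~$\leq k$ directly. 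The forward direction simply expands each original action into its full chain, giving length $l(k+3)+1\leq k(k+3)+1$ (the extra~$+1$ is a single action $a_g$ resetting a global dummy variable~$g$ introduced so that good actions, which have no bad effect to seed the chain, still enter one).

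So the key idea you are missing is not a linkage between split halves but a uniform \emph{amplification} of the cost of every action by a factor of roughly~$k$, implemented via per-action chains of auxiliary variables. This is what makes the arithmetic $k'=k(k+3)+1$ come out exactly.
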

\begin{proof}
  The required instance $\insti'$ is constructed from $\insti$ as follows. 
  $\vars'$ contains the following variables:
  \begin{itemize}
  \item All variables in $\vars$;
  \item One binary variable $g$;
  \item For every action $a \in \acts$ and every $1 \leq i \leq
    k+2$ one binary variable $v_i(a)$;
  \end{itemize}
  $\acts'$ contains the following actions:
  \begin{itemize}
  \item For every mixed action $a \in \acts$ that has a good effect on
    the variable $v$ and a bad effect on the variable $v'$ one action
    $a_1(a)$ such that
    $\proj{\eff(a_1(a))}{v'}=\proj{\eff(a)}{v'}$ and
    $\proj{\eff(a_1(a))}{v_1(a)}=0$, one action $a_i(a)$ for every $1 < i <
    k+3$ such that
    $\proj{\eff(a_i(a))}{v_{i-1}(a)}=1$ and
    $\proj{\eff(a_i(a))}{v_i(a)}=0$, as well as one action $a_{k+3}(a)$ such
    that $\proj{\eff(a_{k+3}(a))}{v_{k+2}(a)}=1$ and
    $\proj{\eff(a_{k+3}(a))}{v}=\proj{\eff(a)}{v}$;
  \item  For every good action $a \in \acts$ that has only one effect
    on the variable $v$ one action
    $a_1$ such that
    $\proj{\eff(a_1(a))}{g}=1$ and
    $\proj{\eff(a_1(a))}{v_1(a)}=0$, one action $a_i(a)$ for every $1 < i <
    k+3$ such that
    $\proj{\eff(a_i(a))}{v_{i-1}(a)}=1$ and
    $\proj{\eff(a_i(a))}{v_i(a)}=0$, as well as one action $a_{k+3}(a)$ such
    that $\proj{\eff(a_{k+3}(a))}{v_{k+2}(a)}=1$ and
    $\proj{\eff(a_{k+3}(a))}{v}=\proj{\eff(a)}{v}$;
  \item For every good action $a \in \acts$ that has two effects
    on the variables $v$ and $v'$ one action
    $a_1(a)$ such that
    $\proj{\eff(a_1(a))}{g}=1$ and
    $\proj{\eff(a_1(a))}{v_1(a)}=0$, one action $a_i(a)$ for every $1 < i <
    k+2$ such that
    $\proj{\eff(a_i(a))}{v_{i-1}(a)}=1$ and
    $\proj{\eff(a_i(a))}{v_i(a)}=0$, one action $a_{k+2}(a)$ such
    that $\proj{\eff(a_{k+2}(a))}{v_{k+1}(a)}=1$ and
    $\proj{\eff(a_{k+2}(a))}{v}=\proj{\eff(a)}{v}$, as well as one action
    $a_{k+3}(a)$ such
    that $\proj{\eff(a_{k+3}(a))}{v_{k+1}(a)}=1$ and
    $\proj{\eff(a_{k+3}(a))}{v'}=\proj{\eff(a)}{v'}$;
  \item One action $a_g$ with $\proj{\eff(a_g)}{g}=0$.
  \end{itemize}
  We set $\dom'=\dom \cup \{0,1\}$, $\proj{\init'}{v}=\proj{\init}{v}$
  for every $v \in \vars$, $\proj{\init'}{v}=0$ for every $v
  \in \vars' \setminus \vars$, $\proj{\goal'}{v}=\proj{\goal}{v}$
  for every $v \in \vars$, $\proj{\goal'}{v}=0$ for every $v
  \in \vars' \setminus \vars$, and $k'=k(k+2)+1$.

  Clearly, $\insti'$ can be constructed from $\insti$ by an algorithm
  that is fixed-parameter tractable (with respect to $k$) and $\insti'$ is an
  instance of $(0,2)$-\BPE{} where no good
  action effects more than $1$ variable. It remains to show that
  $\insti'$ is equivalent to $\insti$.

  Suppose that $\plan=\seq{a_1,\dotsc,a_l}$ is a plan of length at
  most $k$ for $\iplan$. Then\\
  $\seq{a_{k+3}(a_1),\dotsc,a_{1}(a_1),\dotsc,
    a_{k+3}(a_l),\dotsc,a_{1}(a_l),a_g}$ is a plan of length $l(k+3)+1\leq
  k(k+3)+1$ for $\iplan'$.

  To see the reverse direction suppose that
  $\plan'=\seq{a_1,\dots,a_{l'}}$ is a minimal (with respect to
  sub-sequences) plan of length at most $k'$ for $\iplan'$. We say that
  $\plan'$ \emph{uses} an action $a \in \acts$ if $a_i(a) \in \plan'$
  for some $1 \leq i \leq k+3$. We also define an order of the actions
  used by $\plan'$ in the natural way, i.e., for two actions $a,a' \in
  \acts$ that are used by $\plan'$ we say that $a$ is smaller than $a'$
  if the first occurrence of an action $a_i(a)$ (for some $1 \leq i
  \leq k+3$) in~$\plan'$ 
  is before the first occurrence of an action
  $a_i(a')$ (for some $1 \leq i \leq k+3$) in~$\plan'$.
  
  Let $\plan=\seq{a_1,\dotsc,a_{l}}$ be the (unique) sequence
  of actions in $\acts$ that are
  used by $\plan'$ whose order corresponds to the order in which there
  are used by $\plan'$. Clearly, $\plan$ is a plan for $\iplan$. 
  It remains to show that $l\leq k$ for which we need the following
  claim.
  \begin{CLM}\label{APPENDIX-clm:many-or-no-actions}
    If $\plan'$ uses some action $a \in \acts$ then $\plan'$
    contains at least $k+2$ actions from $a_1(a),\dotsc,a_{k+3}(a)$.
  \end{CLM}
  Let $i$ be the largest integer with $1 \leq i \leq
  k+3$ such that $a_i(a)$ occurs in~$\plan'$. We first show by
  induction on $i$ that
  $\plan'$ contains all actions in $\SB a_j(a) \SM 1 \leq j \leq i
  \SE$. Clearly, if $i=1$ there is nothing to show, so assume that
  $i>1$. The induction step follows from the fact that the action $a_i(a)$ has a
  bad effect on the variable $v_{i-1}(a)$ and the action
  $a_{i-1}(a)$ is the only action of $\iplan'$ that has a good effect on
  $v_{i-1}(a)$ and hence $\plan'$ has to contain the action
  $a_{i-1}(a)$. It remains to show that $i \geq k+2$.
  Suppose for a contradiction that $i < k+2$ and consequently the
  action $a_{i+1}(a)$ is not contained in~$\plan'$. Because the action
  $a_{i+1}(a)$ is the only action of $\iplan'$
  that has a bad effect on the variable $v_i(a)$ it follows that the
  variable $v_i(a)$ remains in the goal state over the whole execution
  of the plan $\plan'$. But then $\plan'$ without the action $a_i(v)$
  would still be a plan for $\iplan'$ contradicting our assumption
  that $\iplan'$ is minimal with respect to sub-sequences. 

  It follows from Claim~\ref{APPENDIX-clm:many-or-no-actions} that $\plan'$ uses at most
  $\frac{l'}{k+2}\leq \frac{k'}{k+2}=\frac{k(k+3)+1}{k+2}<k+1$ actions
  from $\acts$. Hence, $l \leq k$ proving the lemma.
\end{proof}
}
%\noindent We are now ready to show that $(0,2)$-\BPE{} is fixed-parameter tractable.
\begin{THE}\label{the:02fpt}
  $(0,2)$-\BPE{} is fixed-parameter tractable.
\end{THE}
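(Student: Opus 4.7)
The plan is as follows. First I invoke Lemma~\ref{lem:nodoublegoodactions} so that, without loss of generality, every good action has exactly one effect; bad actions can always be deleted from any plan without harm, so every remaining action of $\iplan$ is either a \emph{good} action with a single (good) effect, or a \emph{mixed} action with one good effect on some variable $v$ and one bad effect on some variable $v'$.

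The key step is a reduction to a directed Steiner tree problem of the kind that Guo, Niedermeier and Suchy~\cite{GuoNiedermeierSuchy11} showed to be fixed-parameter tractable when parameterized by the solution size. I introduce a fresh vertex $r$ and construct a directed graph $D$ with vertex set $\SB v\in\vars \SM \proj{\goal}{v}\neq\undef\SE\cup\SB r\SE$ as follows: every good action with effect on $v$ contributes an arc $(r,v)$; every mixed action with good effect on $v$ and bad effect on $v'$ contributes the arc $(v',v)$ if $\proj{\goal}{v'}\neq\undef$, and the arc $(r,v)$ otherwise. Each arc is labelled by the action that produced it, and the terminal set is $B(\vars)$.

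The central claim to be proved is that $\iplan$ has a plan of length at most $k$ if and only if $D$ has an out-arborescence rooted at $r$ that contains every terminal and uses at most $k$ arcs. For the forward direction I would start from a minimal plan and, for every variable $v$ with $\proj{\goal}{v}\neq\undef$ that is written by the plan, designate the \emph{last} action that writes $v$. Validity forces this last writer to set $v$ to $\proj{\goal}{v}$, so it is either good acting on $v$ or mixed with good effect on $v$, thus contributing exactly one in-arc at $v$; if it is mixed with bad effect on some $v'$ with $\proj{\goal}{v'}\neq\undef$, then validity also forces $v'$ to be written later, so $v'$ too has its own selected last writer. The selected arcs therefore form an out-arborescence rooted at $r$ that covers $B(\vars)$ and uses at most as many arcs as the plan has actions. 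For the reverse direction, given such an arborescence, I list its arcs in order of decreasing distance of their head from $r$ and execute the associated actions in that order; a short induction on this order shows that, when the action of arc $(u,w)$ fires, it sets $w$ to its goal value, and any temporary corruption of $u$ will be repaired when the unique in-arc of $u$ fires later.

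The main obstacle lies in establishing this equivalence rigorously, particularly in the forward direction, where one must exploit the validity of the plan to rule out the pathological case that a mixed action is the last writer of its bad-effect variable (which would otherwise leave that variable outside its goal value and break the arborescence structure). Once the equivalence is in hand, applying the FPT algorithm of~\cite{GuoNiedermeierSuchy11} to $D$ yields fixed-parameter tractability of $(0,2)$-\BPE{}, since Lemma~\ref{lem:nodoublegoodactions} increases the parameter only polynomially.
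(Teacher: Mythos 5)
Your proposal is correct and follows essentially the same route as the paper: after applying Lemma~\ref{lem:nodoublegoodactions}, both reduce $(0,2)$-\BPE{} to \textsc{Directed Steiner Tree} with exactly the same arc construction (root-to-$v$ arcs for good actions, arcs from the bad-effect variable to the good-effect variable for mixed actions), the same terminal set $B(\vars)$ and bound $k$, and then invoke the FPT algorithm of Guo, Niedermeier and Such{\'y}. The only difference is in the bookkeeping for the forward direction: you select one in-arc per variable via its last writer and claim the result is an out-arborescence, whereas the paper maps every non-bad action of the plan to an arc and proves reachability of each terminal by a minimal-counterexample argument; both work, but note that your arborescence claim additionally needs the temporal ordering of the last writers to exclude directed cycles among the selected arcs, not only the exclusion of the pathological case you mention.
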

\begin{proof}
  We show fixed-parameter tractability of $(0,2)$-\BPE{} by reducing it
  to the following fixed-parameter tractable problem~\cite{GuoNiedermeierSuchy11}.
  \smallskip
  \begin{quote}
    \noindent
    \textsc{Directed Steiner Tree}\\
    \noindent  
    \emph{Instance:} A set of nodes $N$, 
    a weight function $w\ :\  N \times N \rightarrow (\mathbb{N} \cup
    \{\infty\})$, 
    a root node $s \in N$,
    a set $T \subseteq N$ of terminals , and a weight bound $p$.\\
    \noindent  
    \emph{Parameter:} $p_M=\frac{p}{\min\SB w(u,v) \SM u,v \in N\SE}$.\\  
    \noindent
    \emph{Question:} Is there a set of arcs $E \subseteq N \times N$ of
    weight $w(E) \leq p$ (where $w(E)=\sum_{e \in E}w(e)$) such that in
    the digraph $D=(N,E)$ for every $t \in T$ there is a directed path
    from $s$ to $t$? We will call the digraph $D$ a \emph{directed
      Steiner Tree (DST)} of weight $w(E)$.
  \end{quote}
  \smallskip Let $\insti=\tuple{\iplan,k}$ where
  $\iplan=\tuple{\vars,\dom,\acts,\init,\goal}$ be an instance of
  $(0,2)$-\BPE{}. Because of Lemma~\ref{lem:nodoublegoodactions} we can
  assume that $\acts$ contains no good actions with two effects.  We
  construct an instance $\insti'=\tuple{N,w,s,T,p}$ of \textsc{Directed
    Steiner Tree} where $p_M=k$ such that $\insti$ is a
  \textsc{Yes}-instance if and only if $\insti'$ is a
  \textsc{Yes}-instance. Because $p_M=k$ this shows that $(0,2)$-\BPE{}
  is fixed-parameter tractable.

  We are now ready to define
  the instance $\insti'$. 
  The node set $N$ 
  consists of the root vertex $s$
  and one node for every variable in $\vars$. The weight function
  $w$ is $\infty$ for all but the following arcs:
  \longversion{\par}
  (i)~For every good action $a \in \acts$ the arc from $s$ to the
    unique variable $v \in \vars$ that is effected by $a$ gets weight~$1$.
  \longversion{\par}
    (ii) For every mixed action $a \in \acts$ with some good effect on
    some variable $v_g \in \vars$ and some bad effect on some variable
    $v_b \in \vars$, the arc from $v_b$ to $v_g$ gets weight $1$.

  We identify the root $s$ from the instance $\insti$ 
  with the node $s$, we let $T$ be the set $B(\vars)$, and
  $p_M=p=k$.
\shortversion{
  \begin{CLM}[$\star$]\label{claim:dst}
    $\iplan$ has a plan of length at most $k$ if and only if $\insti'$
    has a DST of weight at most $p_M=p=k$.
  \end{CLM}
The theorem follows.
}
\longversion{
  \begin{CLM}\label{claim:dst}
    $\iplan$ has a plan of length at most $k$ if and only if $\insti'$
    has a DST of weight at most $p_M=p=k$.
  \end{CLM}
  Suppose $\iplan$ has a plan $\plan=\seq{a_1,\dots,a_l}$ with $l
  \leq k$. W.l.o.g.~we can assume that $\plan$ contains no bad
  actions.
  The arc set $E$ that corresponds to $\plan$ consists of
  the following arcs:
  \longversion{\par}
(i)~For every good action $a \in \plan$ that has its unique good
    effect on a variable $v \in \vars$, the set $E$ contains the arc
    from $s$ to $v$.
  \longversion{\par}
(ii)~For every mixed action $a \in \plan$ with a good effect on some
    variable $v_g$ and a bad effect on some variable $v_b$, the set $E$ contains an arc
    from $v_b$ to $v_g$.

  It follows that the weight of $E$ equals the number of
  actions in~$\plan$ and hence is at most $p=k$ as
  required. It remains to show that the digraph $D=(V,E)$ is a
  DST, i.e., $D$ contains a directed path from the vertex $s$ to
  every vertex in $T$. 
  Suppose to the contrary that there is a terminal
  $t \in T$ that is not reachable from $s$ in~$D$. Furthermore, let $R
  \subseteq E$ be the set of all arcs in $E$ such that $D$ contains a
  directed path from the tail of every arc in $R$ to $t$. It follows
  that no arc in $R$ is incident to $s$. Hence, $R$ only consists of
  arcs that correspond to mixed actions in~$\plan$. If $R=\emptyset$
  then the plan $\plan$ does not contain an action that effects the
  variable $t$. But this contradicts our assumption that $\plan$ is a
  plan (because $t \in B(\vars)$). Hence, $R \neq \emptyset$. Let $a$ be
  the mixed action corresponding to an arc in $R$ that occurs last in
  $\plan$ (among all mixed actions that correspond to an arc in
  $R$). Furthermore, let $v \in \vars$ be the variable that is badly
  affected by $a$. Then $\plan$ can not be a plan because after the
  occurrence of $a$ in~$\plan$ there is no action in~$\plan$ that
  affects $v$ and hence $v$ can not be in the goal state after $\plan$
  is executed.

  To see the reverse direction, let $E \subseteq N \times N$ be a
  solution of $\insti$ and let $D=(N,E)$ be the {DST}.  W.l.o.g.~we can
  assume that $D$ is a directed acyclic tree rooted in $s$ (this follows
  from the minimality of $D$). We obtain a plan $\plan$ of length at
  most $p$ for $\iplan$ by traversing the DST $D$ in a bottom-up
  manner. More formally, let $d$ be the maximum distance from $s$ to any
  node in $T$, and for every $1 \leq i < d$ let $A(i)$ be the set of
  actions in $\acts$ that correspond to arcs in $E$ whose tail is at
  distance $i$ from the node $s$. Then $\plan=\seq{A(d-1),\dots,A(1)}$
  (for every $1 \leq i \leq d-1$ the actions contained in $A(d-1)$ can
  be executed in an arbitrary order) is a plan of length at most $k=p$
  for $\iplan$.
  
Hence Claim~\ref{claim:dst} is established, and the theorem follows.}
\end{proof}

\section{Kernel Lower Bounds}

\longversion{\subsection{Kernel Lower Bounds for $(0,2)$-\BPE{}}}

Since $(0,2)$-\BPE{} is fixed-parameter tractable by
Theorem~\ref{the:02fpt} it admits a kernel. Next we provide strong
theoretical evidence that the problem does not admit a polynomial
kernel. \shortversion{The proof of Theorem~\ref{02bpenokernel} is
  based on an OR-composition algorithm and
  Proposition~\ref{pro:or-comp-no-poly-kernel}.}

\shortversion{
\begin{THE}[$\star$] \label{02bpenokernel}
  $(0,2)$-\BPE{} has no polynomial kernel unless
  $\NOPOLYKERNEL$.
\end{THE}
}

\longversion{
\begin{THE} \label{02bpenokernel}
  $(0,2)$-\BPE{} has no polynomial kernel unless
  $\NOPOLYKERNEL$.
\end{THE}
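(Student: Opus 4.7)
The plan is to apply Proposition~\ref{pro:or-comp-no-poly-kernel} by exhibiting an OR\hy composition algorithm for $(0,2)$-\BPE. Given $t$ input instances $\tuple{\iplan_1,k},\dots,\tuple{\iplan_t,k}$ sharing the parameter $k$, the task is to build in time polynomial in $\sum_i\Card{\iplan_i}+k$ a single instance $\tuple{\iplan',k'}$ of $(0,2)$-\BPE{} with $k'$ polynomial in $k$, such that $\iplan'$ has a plan of length at most $k'$ if and only if at least one $\iplan_i$ has a plan of length at most $k$.

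The first step would be a normalization phase. Using polynomial padding analogous to the dummy variables and bad-effect chains introduced in Lemma~\ref{lem:nodoublegoodactions}, I would bring every input instance into a common form $\iplan_i=\tuple{\vars,\dom,\acts_i,\init,\goal}$ in which the variable set, domain, initial state, and goal are identical across all $i$; all differences between instances are then concentrated in the action set $\acts_i$. A further padding of each $\acts_i$ can ensure that each $\iplan_i$ is either a \textsc{No}-instance or has minimum plan length exactly $k$, which simplifies the budget bookkeeping later.

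The second step is the composition proper. The instance $\iplan'$ would consist of one private copy of $\vars$ per index $i$, together with a shared ``entry gadget'' built from a small number of auxiliary variables. For every $a\in\acts_i$, a corresponding action is inserted into $\acts'$ which behaves like $a$ on the $i$-th copy of $\vars$ but additionally writes to the shared gadget in such a way that using any tag-$i$ action triggers a long bad-effect chain on the $i$-th branch of the gadget, and this chain can only be cleaned up by executing further tag-$i$ gadget actions. The chain length and the parameter $k'=p(k)$ are calibrated so that the budget of $\iplan'$ covers the traversal of the bad-effect chain for a single branch plus one full $\iplan_i$-plan of length $k$, but not two.

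The easy direction of correctness is to lift a plan of length at most $k$ for some $\iplan_i$ to $\iplan'$ by composing it, on the $i$-th copy of $\vars$, with a traversal of branch $i$ of the gadget. The main obstacle, and the technical heart of the proof, is the converse: from a plan of length at most $k'$ for $\iplan'$ we must produce a plan of length at most $k$ for some $\iplan_i$. Because $(0,2)$-\BPE{} forbids preconditions, the only mechanism available to prevent mixing actions from different indices is the bad-effect machinery; the argument will proceed by a minimality/counting analysis in the style of the proof of Lemma~\ref{lem:nodoublegoodactions}, showing that once any tag-$i$ gadget action appears in a minimal plan, the entire chain of tag-$i$ gadget actions is forced to appear as well and accounts for essentially the full budget overhead. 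Consequently, a plan of length at most $k'$ can afford this overhead for at most one index $i^*$, and projecting the remaining actions onto the $i^*$-th copy of $\vars$ yields a plan of length at most $k$ for $\iplan_{i^*}$, completing the OR-composition.
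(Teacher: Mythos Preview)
Your overall plan---OR-composition, with per-instance bad-effect chains and a budget argument that a short plan for the composite can afford the chain overhead for at most one branch---is precisely the paper's approach, and your correctness outline for both directions mirrors the paper's. The concrete mechanism in your second step, however, has a gap. You propose that for every $a\in\acts_i$ the copy in $\acts'$ ``behaves like $a$ \dots\ but additionally writes to the shared gadget.'' In $(0,2)$-\BPE{} each action has at most two effects, and after any normalization along the lines of Lemma~\ref{lem:nodoublegoodactions} the mixed actions (and in fact almost all actions in the transformed instance) already use both effect slots. There is no room for an extra tagging effect, and with zero preconditions there is no alternative way to make executing an action ``trigger'' a gadget. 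So the composed instance as you describe it is not a $(0,2)$ instance.

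The paper circumvents this by not tagging the instance actions. It first applies Lemma~\ref{lem:nodoublegoodactions} to each $\iplan_i$; the feature of that lemma exploited here is that in the transformed instance every good-action chain writes $1$ to a single fresh variable $g^i$, and exactly one action $a_g^i$ resets $g^i$ to~$0$. The composition leaves the action sets $\acts_i'$ untouched except that it \emph{deletes} each $a_g^i$ and replaces it by an action whose second effect starts a private bad-effect chain of length about $2k'$. Separately, $k'$ fresh global variables $b_1,\dots,b_{k'}$ must be brought to their goal value, and the only actions able to do so each carry, as their second effect, a bad effect on some variable of $B(\vars_i)$. A plan for the composite must therefore disturb some instance~$i$ via the $b_j$-actions, then solve $\iplan_i'$ (which inevitably trips $g^i$), then pay the long chain to reset $g^i$; the budget $4k'+1$ admits this for exactly one~$i$. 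In short, instance selection is enforced by hijacking the unique reset action that Lemma~\ref{lem:nodoublegoodactions} already supplies, not by attaching a side effect to every action.
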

\begin{proof}
  Because of Proposition~\ref{pro:or-comp-no-poly-kernel} it suffices
  to devise an OR-composition algorithm for $(0,2)$-\BPE{}.
  Suppose we are given
  $t$ instances
  $\insti_1=\tuple{\iplan_1,k},\dots,\insti_t=\tuple{\iplan_t,k}$ of
  $(0,2)$-\BPE{} where
  $\iplan_i=\tuple{\vars_i,\dom_i,\acts_i,\init_i,\goal_i}$ for every
  $1 \leq i \leq t$. We will now show how we can construct the
  required instance $\insti=\tuple{\iplan,k''}$ of $(0,2)$-\BPE{} via
  an OR-composition algorithm. As a first step we compute the
  instances
  $\insti_1'=\tuple{\iplan_1',k'},\dots,\insti_t'=\tuple{\iplan_t',k'}$
  from the instances
  $\insti_1=\tuple{\iplan_1,k},\dots,\insti_t=\tuple{\iplan_t,k}$
  according to Lemma~\ref{lem:nodoublegoodactions}. Then
  $\vars$ consists of the following variables:
  \longversion{\par}
  (i)~the variables $\bigcup_{1\leq i \leq t}\vars_i'$;
  \longversion{\par}
  (ii)~binary variables $b_1,\dotsc,b_{k'}$;
  \longversion{\par}
  (iii) for every $1 \leq i \leq t$ and $1 \leq j < 2k'$ a binary
    variable $p(i,j)$;
  \longversion{\par}
  (iv) A binary variable $r$.

  $\acts$ contains the action $a_r$ with $\proj{\eff(a_r)}{r}=0$ and
  the following additional actions for every $1 \leq i \leq t$:

  (i) The actions $\acts_i' \setminus a_g^i$, where $a_g^i$ is the
    copy of the action $a_g$ for the instance $\insti_i'$ (recall the
    construction of $\insti_i'$ given in Lemma~\ref{lem:nodoublegoodactions});

 (ii) An action $a_i(r)$ with $\proj{\eff(a_i(r))}{r}=1$ and
    $\proj{\eff(a_i(r))}{p(i,1)}=0$;

  (iii) For every $1 \leq j < 2k'-1$ an action
    $a_{i,j}$ with $\proj{\eff(a_{i,j})}{p_{i,j}}=1$ and
    $\proj{\eff(a_{i,j})}{p(i,j+1)}=0$;

 (iv) An action $a_i(g)$ with $\proj{\eff(a_i(g))}{p_{i,2k'-1}}=1$ and
    $\proj{\eff(a_i(g))}{g^i}=0$ where $g^i$ is the
    copy of the variable~$g$ for the instance $\insti_i'$ (recall the
    construction of $\insti_i'$ given in Lemma~\ref{lem:nodoublegoodactions});

 (v) Let $v_1,\dotsc,v_r$ for $r \leq k'$ be an
    arbitrary ordering of the variables in $B(\vars_i)$
    (recall the
    definition of $B(\vars_i)$ from Section~\ref{sec:res-fpt}). 
    Then for every $1 \leq j \leq r$ we introduce an action
    $a_i(b_j)$ with $\proj{\eff(a_i(b_j))}{v_j}=\proj{\init_i'}{v_j}$ and
    $\proj{\eff(a_i(b_j))}{b_j}=0$. Furthermore, for 
    every $r < j \leq k'$ we introduce an action $a_i(b_j)$ with
    $\proj{\eff(a_i(b_j))}{v_r}=\proj{\init_i'}{v_r}$ and
    $\proj{\eff(a_i(b_j))}{b_j}=0$.

  We set $\dom=\bigcup_{1 \leq i \leq t}\dom_i' \cup \{0,1\}$, $\proj{\init}{v}=\proj{\init_i'}{v}$
  for every $v \in \vars_i'$ and $1 \leq i \leq t$, 
  $\proj{\init}{v}=0$ for every $v \in \vars \setminus ((\bigcup_{1
    \leq i \leq t}\vars_i') \cup \{b_1,\dotsc,b_{k'}\})$, 
  $\proj{\init}{v}=1$ for every $v \in \{b_1,\dotsc,b_k\}$,
  $\proj{\goal}{v}=\proj{\init_i'}{v}$
  for every $v \in \vars_i'$ and $1 \leq i \leq t$, 
  $\proj{\goal}{v}=0$ for every $v \in \vars \setminus (\bigcup_{1
    \leq i \leq t}\vars_i')$, and $k''=4k'+1$.

  Clearly, $\insti$ can be constructed from
  $\insti_1,\dots,\insti_t$ in polynomial time with respect to
  $\sum_{1\leq i \leq t}|\insti_i|+k$
  and the parameter $k''=4k'+1=4(k(k+3)+1)+1$
  is polynomial bounded by the parameter $k$. 
  \longversion{By showing the following claim we conclude the proof of
    the theorem.} 
  \begin{CLM}\label{clm:or}
    $\insti$ is a \textsc{Yes}-instance if and
    only if at least one of the instances $\insti_1,\dots,\insti_t$ is
    a \textsc{Yes}-instance.  
  \end{CLM}

  Suppose that there is an $1 \leq i \leq t$ such that $\iplan_i$
  has a plan of length at most $k$. It follows from
  Lemma~\ref{lem:nodoublegoodactions} 
  that $\iplan_i'$ has a plan $\plan'$ of length at most $k'$. Then it
  is straightforward to check that $\plan=\seq{a_i(b_1), \dotsc,a_i(b_{k'})}
  \concat \plan' \concat \seq{a_i(g),a_{i,2k'-2},\dotsc,a_{i,1},a_i(r),a_r}$ is a
  plan of length at most $4k'+1$ for $\iplan$.

  For the reverse direction let $\plan$ be a plan of length at most
  $k''$ for $\iplan$. W.l.o.g.~we can assume that for every $1\leq i
  \leq t$ the set $B(\vars_i')$ is not empty and hence every
  plan for $\iplan_i'$ has to contain at least one good action $a \in
  \acts_i'$. Because $\proj{\eff(a)}{g^i} \neq \init_i'[g]$ for every such
  good action $a$ (recall the construction of $\insti_i'$ according to
  Lemma~\ref{lem:nodoublegoodactions}) it follows that there is an $1
  \leq i \leq t$ such that $\plan$ contains all the $2k'+1$ actions
  $a_i(g),a_{i,2k'-2},\dotsc,a_{i,1},a_i(r),a_r$. Furthermore, because
  $k''<2(2k'+1)$ there can be at most one such $i$ and hence
  $\plan \cap \bigcup_{1 \leq j \leq t}\acts_j' \subseteq
  \acts_i'$. Because $B(\vars)=\{b_1,\dotsc,b_{k'}\}$ the plan $\plan$
  also has to contain the actions
  $a_i(b_1),\dotsc,a_i(b_{k'})$. Because of the effects (on the
  variables in $B(V_i)$) of these
  actions it follows that $\plan$ has to contain a plan $\plan_i'$ of
  length at most $4k'+1-(2k'+1)-k'=k'$ for $\iplan_i'$. It now follows
  from Lemma~\ref{lem:nodoublegoodactions} that $\iplan_i$ has a plan
  of length at most $k$.
\end{proof}
}
\longversion{\subsection{Kernel Lower Bounds for PUBS Restrictions}}

\sloppypar In previous work~\cite{BackstromChenJonssonOrdyniakSzeider12}
we have classified the parameterized complexity of the ``PUBS''
fragments of \BPE{}. It turned out that the problems fall into four
categories  (see Figure~\ref{fig:pubs-lattice}):
  \longversion{\par}
 (i)~polynomial-time solvable, 
  \longversion{\par}
(ii)~NP-hard but fixed-parameter tractable, 
  \longversion{\par}
(iii)~$\W{1}$\hy complete, and
  \longversion{\par}
(iv)~$\W{2}$\hy complete.
  \longversion{\par}
  The aim
of this section is to further refine this classification with respect to
kernelization. The problems in category~(i) trivially admit a kernel of
constant size, whereas the problems in categories~(iii) and (iv) do not admit a
kernel at all (polynomial or not), unless $\W{1}=\FPT$ or $\W{2}=\FPT$,
respectively. Hence it remains to consider the six problems in
category~(ii), each of them could either admit a polynomial kernel or
not. We show that none of them does.
 
According to our classification
\cite{BackstromChenJonssonOrdyniakSzeider12}, the problems in
category~(ii) are exactly the problems $R$-\BPE{}, for $R \subseteq
\{P,U,B,S\}$, such that $P\in R$ and $\{P,U,S\} \not\subseteq R$.

\begin{THE}\label{thm:nopolykernel-p}
  None of the problems $R$-\BPE{} for $R \subseteq \{P,U,B,S\}$ such
  that $P\in R$ and $\{P,U,S\} \not\subseteq R$ (i.e., the problems in
  category~(ii)) admits a polynomial kernel unless $\NOPOLYKERNEL$.
\end{THE}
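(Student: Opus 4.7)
The plan is to use the PUBS lattice to reduce the theorem to a single OR\hy composition for PUB-\BPE{}, then apply Proposition~\ref{pro:or-comp-no-poly-kernel}. Whenever $R\subseteq R'\subseteq\{P,U,B,S\}$, the identity mapping is a polynomial parameter reduction from $R'$-\BPE{} to $R$-\BPE{}, since every $R'$-instance is also an $R$-instance and the parameter is preserved. For every $R$ in category~(ii) the unparameterized problem is \NP\hy complete (it is \NP\hy hard by~\cite{Bylander94,BackstromNebel95,BackstromChenJonssonOrdyniakSzeider12} and in \NP{} because a plan of unary\hy given length~$k$ is polynomial\hy time verifiable), so Proposition~\ref{pro:poly-par-reduction} applies. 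Since $\{P,U,S\}$ and $\{P,U,B,S\}$ sit in category~(i), the two maximal elements of category~(ii) are PUB and PBS. Combining the identity reductions with the polynomial parameter reduction from PUB-\BPE{} to PBS-\BPE{} that is observable from prior results (the basis of the section\hy introduction remark that ``a polynomial kernel for PBS implies one for PUB''), a polynomial kernel for any category~(ii) problem would yield one for PUB-\BPE{}. It therefore suffices to exclude a polynomial kernel for PUB-\BPE{}.

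For this I would supply an OR\hy composition algorithm for PUB-\BPE{}. Given inputs $\tuple{\iplan_1,k},\dotsc,\tuple{\iplan_t,k}$, I first rename variables so that the sets $\vars_1,\dotsc,\vars_t$ are pairwise disjoint (this preserves PUB), and introduce one fresh Boolean ``activation'' variable~$x_i$ per input instance, initially~$0$. To the union of the renamed action sets I add, for each~$i$, an ``initiate'' action $a_i^+$ with no precondition and unique effect $x_i=1$, and a ``terminate'' action $a_i^-$ whose precondition is exactly the goal of $\iplan_i$ (a conjunction of values on~$\vars_i$) and whose unique effect is $x_i=0$; every copied action of $\iplan_i$ is augmented by the single extra precondition $x_i=1$. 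The composed goal requires only $x_i=0$ for every~$i$ and leaves every variable in $\vars_1\cup\dotsb\cup\vars_t$ unspecified. Setting $k'=k+2$, the composed instance is a yes\hy instance iff some $\iplan_i$ is: the easy direction initiates~$i$, executes a plan of length~$k$ for~$\iplan_i$, and terminates~$i$; conversely, any plan for the composed instance must fire some $a_j^+$ to enable any $\iplan_j$\hy action, and the variable $x_j$ can only be reset to~$0$ by the corresponding $a_j^-$, whose precondition forces a complete plan for $\iplan_j$ to have been executed within the length budget.

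The main obstacle is preserving the postunique restriction while still mimicking the OR behaviour: the naive gadget with a single shared ``success'' variable would need $t$ distinct producers of the same effect, which postunique forbids. The construction above sidesteps this by using $t$ \emph{separate} activation variables $x_1,\dotsc,x_t$, so that the initiate and terminate actions of different instances write distinct (variable, value) pairs; the pairwise disjoint variable sets $\vars_1,\dotsc,\vars_t$ ensure that the inherited $\iplan_i$\hy actions cannot collide either. Unariness is immediate because every new action has a single effect, and Booleanness is immediate because each $x_i$ is binary. A secondary verification is that multiple instances cannot be ``partially'' activated within the parameter budget $k'=k+2$: if a plan initiates $\ell$ instances, it must also terminate them all, so its length is at least $\sum_{j}(|\plan_j|+2)$, which already forces some~$|\plan_j|\le k$ and hence some $\iplan_j$ to be a yes\hy instance, as the OR semantics require.
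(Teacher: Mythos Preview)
Your reduction of the theorem to a single kernel lower bound for $\{P,U,B\}$-\BPE{} matches the paper's strategy, and your use of Proposition~\ref{pro:poly-par-reduction} together with the identity reductions along the PUBS lattice and the $\{P,U,B\}\to\{P,B,S\}$ reduction is correct.

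The OR-composition you propose, however, does not work: in your composed instance every activation variable $x_i$ is initially~$0$ and the goal only requires every $x_i$ to be~$0$, so the \emph{empty} plan already reaches the goal. Hence the composed instance is a \textsc{Yes}-instance regardless of whether any of the $\iplan_i$ are, and the construction is not an OR-composition. Your converse argument tacitly assumes that any plan must execute at least one $\iplan_j$-action (and therefore must first fire some~$a_j^+$), but nothing forces this; your ``secondary verification'' about $\ell$ activations covers $\ell\ge 1$ but not $\ell=0$.

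This is not a cosmetic slip that a small tweak repairs. If you try to force the plan to do something---say, require a single fresh goal variable to change value---then under postuniqueness only \emph{one} action can produce that effect, and that action's precondition can depend on at most one instance's variables, destroying the OR semantics. If instead you initialise all $x_i$ to~$1$, you must reset \emph{all} of them and hence solve all $t$ instances, which is an AND rather than an OR. The paper handles precisely this obstacle: it first observes that $\{P,U,B\}$-\BPE{} is solvable in time $O^*(S(k))$, so when $t>S(k)$ one simply solves all instances outright; otherwise $t\le S(k)$ and therefore $\log t$ is polynomially bounded in~$k$. With this bound available, the paper builds a depth-$\lceil\log t\rceil$ binary tree of small $\textup{OR}_2$ gadgets (each postunique, unary, Boolean) that funnels $t$ per-instance success bits $v_1,\dots,v_t$ into a single output bit~$o$; the composed goal is just $o=1$ and $k'=k+6\lceil\log t\rceil+1$. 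The case split on $t$ versus $S(k)$ is what keeps $k'$ polynomial in~$k$ despite the $\log t$ term, and the tree is what realises an OR of $t$ bits while respecting postuniqueness---the very difficulty you correctly identified but did not resolve.
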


The remainder of this section is devoted to
establish~Theorem~\ref{thm:nopolykernel-p}.  The relationship between
the problems as indicated in Figure~\ref{fig:pubs-lattice} greatly
simplifies the proof.  Instead of considering all six problems
separately, we can focus on the two most restricted problems
$\{P,U,B\}$-\BPE{} and $\{P,B,S\}$-\BPE{}.  If any other problem in
category~(ii) would have a polynomial kernel, then at least one of these
two problems would have one. This follows by
Proposition~\ref{pro:poly-par-reduction} and the following facts:
\begin{enumerate}
\item The unparameterized versions of all the problems in category~(ii)
  are \NP-complete. This holds since  the corresponding classical
  problems are strongly \NP-hard, hence the problems remain \NP-hard
  when $k$ is encoded in unary (as shown by B\"{a}ckstr\"{o}m and Nebel
  \cite{BackstromNebel95});
\item If $R_1\subseteq R_2$ then the identity function gives a
  polynomial parameter reduction from $R_2$-\BPE{} to $R_1$-\BPE{}.
\end{enumerate}
Furthermore, the following result of B\"{a}ckstr\"{o}m and Nebel
\cite[Theorem 4.16]{BackstromNebel95} even provides a polynomial
parameter reduction from $\{P,U,B\}$-\BPE{} to
$\{P,B,S\}$-\BPE{}. Consequently, $\{P,U,B\}$-\BPE{} remains the only
problem for which we need to establish a superpolynomial kernel lower
bound.

\begin{PRO}[B\"{a}ckstr\"{o}m and Nebel \cite{BackstromNebel95}]
  \label{pro:from-pb-to-pbs}
  Let $\insti=\tuple{\iplan,k}$ be an instance of
  $\{P,U,B\}$-\BPE{}. Then $\insti$ can be transformed in polynomial
  time into an equivalent instance $\insti'=\tuple{\iplan',k'}$ of
  $\{P,B,S\}$-\BPE{} such that $k=k'$.
\end{PRO}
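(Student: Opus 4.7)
The fundamental obstruction to single-valuedness in a $\{P,U,B\}$ instance is that two distinct actions may both read the same variable $v$ as a precondition but require opposite Boolean values, while neither of them writes to $v$. I would remove this obstruction by \emph{splitting} each original variable $v \in \vars$ into two fresh Boolean variables $v^0$ and $v^1$ that separately track the two possible Boolean values of $v$. A precondition $\proj{\pre(a)}{v}=c$ of an original action $a$ then translates into a precondition on $v^c$ alone (rather than on the original $v$), so actions that disagreed on $v$ in the original instance now have preconditions on \emph{different} new variables, which is precisely what single-valuedness requires.

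Concretely, the construction sets $\vars'=\SB v^c \SM v\in\vars,\,c\in\{0,1\}\SE$ with Boolean domain and $k'=k$. For every original action $a$ whose unique effect is $\proj{\eff(a)}{v}=c$, introduce one new action $a'$ with effects $\proj{\eff(a')}{v^c}=1$ and $\proj{\eff(a')}{v^{1-c}}=0$ (two effects, so the unary restriction is dropped as intended), and define its preconditions by the rule: whenever $\proj{\pre(a)}{u}=d$ in the original, set $\proj{\pre(a')}{u^d}=1$. The initial and goal states are translated pointwise: if $\proj{\init}{v}=c$, set $\proj{\init'}{v^c}=1$ and $\proj{\init'}{v^{1-c}}=0$; if $\proj{\goal}{v}=c$, set $\proj{\goal'}{v^c}=1$ and $\proj{\goal'}{v^{1-c}}=0$; if $\proj{\goal}{v}=\undef$, leave both $v^0$ and $v^1$ undefined in $\goal'$. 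This is clearly a polynomial-time transformation producing a Boolean instance.

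The verification splits into three parts. Restriction B is immediate since the new domain is $\{0,1\}$. For P, observe that for any pair $(v^c,d)$ the only new action that writes $d$ into $v^c$ is the image of the unique original action whose effect on $v$ was $c$ (when $d=1$) or $1-c$ (when $d=0$); postuniqueness of the input therefore lifts to postuniqueness of the output. For S, every new action has precondition value $1$ on any new variable it reads, so no two new actions can disagree on the precondition value of a common new variable. Finally, each original action corresponds to exactly one new action, and one shows by induction on plan length that the reachable states of $\iplan'$ are precisely the images of reachable states of $\iplan$ under the encoding $\proj{s'}{v^c}=1$ iff $\proj{s}{v}=c$; hence a length-$\ell$ plan exists for $\iplan$ iff one exists for $\iplan'$, and $k'=k$ preserves the parameter exactly. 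The main subtlety to argue carefully is this lockstep correspondence: namely, that the invariant $\proj{s'}{v^0}+\proj{s'}{v^1}=1$ is maintained by every new action, which holds because each new action writes both $v^c$ and $v^{1-c}$ simultaneously whenever the original writes $v$.
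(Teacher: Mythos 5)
Your construction is correct, and it is essentially the standard one: the paper itself gives no proof here but imports the statement from B\"{a}ckstr\"{o}m and Nebel (Theorem~4.16), whose transformation is precisely this splitting of each Boolean variable $v$ into two indicator variables $v^0,v^1$ so that every precondition can be stated as requiring value $1$, trading the unary restriction for single-valuedness while preserving postuniqueness and plan length exactly. Your verification of P, S, and the lockstep state correspondence via the invariant $\proj{s'}{v^0}+\proj{s'}{v^1}=1$ is sound, so nothing is missing.
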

Hence, in order to complete the proof of
Theorem~\ref{thm:nopolykernel-p} it only remains to establish the next
lemma.

\begin{LEM}\label{lem:nopolykernel-pub}
  $\{P,U,B\}$-\BPE{}
 has no polynomial kernel unless $\NOPOLYKERNEL$.
\end{LEM}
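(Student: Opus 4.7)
By Proposition~\ref{pro:or-comp-no-poly-kernel} it suffices to construct an OR-composition algorithm for $\{P,U,B\}$-\BPE{}. Given $t$ input instances $\insti_1=\tuple{\iplan_1,k},\dots,\insti_t=\tuple{\iplan_t,k}$, my aim is to produce a single $\{P,U,B\}$-\BPE{} instance $\insti=\tuple{\iplan,k''}$ with $k''$ polynomial in $k$ that is a \textsc{Yes}-instance iff at least one of the $\insti_i$ is.

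The overall construction I propose follows the template of the OR-composition used for $(0,2)$-\BPE{} in Theorem~\ref{02bpenokernel}, with several adaptations to accommodate the extra restrictions $P$ and $U$. I would first take disjoint copies of each $\iplan_i$, so no variable is shared. For every instance $i$ I would introduce a Boolean guard $g_i$ (initial value $0$) together with a unique activator action $a_i^{\mathrm{on}}$ whose sole effect is $g_i=1$, and I would add the precondition $g_i=1$ to every action in the copy of $\iplan_i$. Only preconditions are modified and each guard's setter is distinct, so the result still satisfies $P$, $U$, and $B$. To tie the planning goal (which is a conjunction of variable assignments) to the ``select one instance'' semantics required for OR-composition, I would design a per-instance verification cascade: for each $i$, a linear chain of $\Theta(k)$ fresh Boolean variables whose unique setters require $g_i=1$ and whose first action's precondition is the goal condition of $\iplan_i$ (pre-normalised to a single ``success'' flag using a padding argument in the spirit of Lemma~\ref{lem:nodoublegoodactions}). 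The composed goal would demand the completion of at least one such cascade, and the plan-length budget $k''=O(k^2)$ would be chosen tight enough that completing two cascades would exceed it.

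The main obstacle is encoding the OR semantics under the postunique restriction: a naive single ``success'' variable would require $t$ different setters (one per instance), which $P$ forbids. I would circumvent this by using per-instance cascade variables (so $P$ is satisfied by construction) and then relying on a careful budget accounting to ensure that only one cascade can be completed within $k''$. A secondary obstacle is that the $(0,2)$-\BPE{} composition uses two-effect actions (for example, to simultaneously flip a commitment bit and set a variable back to its initial value), which $U$ disallows; I would replace each such action by a short chain of single-effect actions linked by fresh stage variables with unique setters, which preserves PUB at the cost of only a constant-factor blow-up in the plan length. Correctness would then follow the pattern of the proof of Theorem~\ref{02bpenokernel}: in the forward direction a plan of length at most $k$ for some $\iplan_i$ lifts to a composed plan that activates $g_i$, executes the (normalised) plan for $\iplan_i$, and fires the $i$-th cascade, all within $k''$; in the reverse direction, the tight budget forces the composed plan to concentrate on a single copy, from which a plan of length at most $k$ for the corresponding $\insti_i$ is straightforwardly extracted.
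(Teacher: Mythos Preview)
Your proposal has a genuine gap at the point you yourself flag as the main obstacle: you never actually explain how the composed goal can express ``at least one cascade completed.'' In \sasplus{} the goal is a partial state, i.e.\ a \emph{conjunction} of variable assignments; there is no way to write a disjunction over $t$ per-instance success conditions. You correctly note that a single shared success variable is forbidden by postuniqueness, but replacing it by $t$ separate cascade-end variables does not help: if the composed goal names any fixed subset of them you are testing only those instances (so it is not an OR), and if it names none of them the empty plan already succeeds. The budget argument cannot rescue this, because the budget constrains plan length only---it does not alter which partial state must be reached. Your appeal to Lemma~\ref{lem:nodoublegoodactions} is also off target: that lemma is a normalisation specific to $(0,2)$-\BPE{} and says nothing about collapsing a conjunctive goal into a single flag under~$P$.

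This disjunction-as-goal problem is exactly what the paper's proof addresses head-on. It designs a PUB-compatible gadget $\textup{OR}_2(v_1,v_2,o)$ (five fresh binary variables, seven unary-effect actions, each effect postunique) that can set $o=1$ in six steps iff $v_1=1$ or $v_2=1$, and then cascades copies of this gadget along a balanced binary tree to compute the OR of $t$ per-instance success flags $v_1,\dots,v_t$ into a single variable~$o$; the composed goal is simply $\proj{\goal}{o}=1$, and the extra plan length is $6\lceil\log t\rceil$. For $k'=k+6\lceil\log t\rceil$ to be polynomial in~$k$ one needs $\log t$ to be polynomial in~$k$, which the paper secures with a second idea absent from your plan: since $\{P,U,B\}$-\BPE{} is solvable in time $O^*(S(k))$ for an explicit $S(k)$, whenever $t>S(k)$ one may solve all $t$ input instances in time polynomial in the total input and output a witness, so one may assume $t\le S(k)$ and hence $\log t$ is polynomial in~$k$. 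Without an OR gadget (or an equivalent device) and without this $t\le S(k)$ bound, your construction does not yield an OR-composition.
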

\begin{proof}
  Because of Proposition~\ref{pro:or-comp-no-poly-kernel}, it suffices
  to devise an OR-composition algorithm for $\{P,U,B\}$-\BPE{}.  Suppose
  we are given $t$ instances
  $\insti_1=\tuple{\iplan_1,k},\dots,\insti_t=\tuple{\iplan_t,k}$ of
  $\{P,U,B\}$-\BPE{} where
  $\iplan_i=\tuple{\vars_i,\dom_i,\acts_i,\init_i,\goal_i}$ for every $1
  \leq i \leq t$.  
  It has been shown in~\cite[Theorem~5]{BackstromChenJonssonOrdyniakSzeider12} that
  $\{P,U,B\}$-\BPE{} can be solved in time
  $O^*(S(k))$ (where $S(k)=2 \cdot 2^{(k+2)^2} \cdot (k+2)^{(k+1)^2}$ 
  and the $O^*$ notation suppresses
  polynomial factors).
  It follows that $\{P,U,B\}$-\BPE{} can be solved in polynomial time
  with respect to $\sum_{1 \leq i \leq t}|\insti_i|+k$ if
  $t>S(k)$. Hence, if $t>S(k)$ this gives us an OR-composition algorithm
  as follows. We first run the algorithm for $\{P,U,B\}$-\BPE{} on each
  of the $t$ instances. If one of these $t$ instances is a
  \textsc{Yes}-instance then we output this instance. If not then we
  output any of the $t$ instances. This shows that $\{P,U,B\}$-\BPE{}
  has an OR-composition algorithm for the case that $t>S(k)$. Hence, in
  the following we can assume that $t \leq S(k)$.

  Given $\insti_1,\dots,\insti_t$ we will construct an instance
  $\insti=\tuple{\iplan,k'}$ of $\{P,U,B\}$-\BPE{} as follows. For the
  construction of $\insti$ we need the following auxiliary gadget, which will be
  used to calculate the logical ``OR'' of two binary variables. The
  construction of the gadget uses ideas from~\cite[Theorem 4.15]{BackstromNebel95}. Assume
  that $v_1$ and $v_2$ are two binary variables. The
  gadget $\textup{OR}_2(v_1,v_2,o)$ consists of the five binary
  variables $o_1$, $o_2$, $o$, $i_1$, and $i_2$. Furthermore,
  $\textup{OR}_2(v_1,v_2,o)$ contains the following actions:
  \begin{itemize}
  \item the action $a_o$ with
    $\proj{\pre(a_o)}{o_1}=\proj{\pre(a_o)}{o_2}=1$ and $\proj{\eff(a_o)}{o}=1$;
  \item the action $a_{o_1}$ with
    $\proj{\pre(a_{o_1})}{i_1}=1$, $\proj{\pre(a_{o_1})}{i_2}=0$ and
    $\proj{\eff(a_{o_1})}{o_1}=1$;
  \item the action $a_{o_2}$ with
    $\proj{\pre(a_{o_2})}{i_1}=0$, $\proj{\pre(a_{o_2})}{i_2}=1$ and $\proj{\eff(a_{o_2})}{o_2}=1$;
  \item the action $a_{i_1}$ with
    $\proj{\eff(a_{i_1})}{i_1}=1$;
  \item the action $a_{i_2}$ with
    $\proj{\eff(a_{i_2})}{i_2}=1$;
  \item the action $a_{v_1}$ with $\proj{\pre(a_{v_1})}{v_1}=1$ and
    $\proj{\eff(a_{v_1})}{i_1}=0$;
  \item the action $a_{v_2}$ with $\proj{\pre(a_{v_2})}{v_2}=1$ and
    $\proj{\eff(a_{v_2})}{i_2}=0$;
  \end{itemize}
  We now show that $\textup{OR}_2(v_1,v_2,o)$ can indeed be used to
  compute the logical ``OR'' of the variables $v_1$ and $v_2$. We need
  the following claim.
\shortversion{\begin{CLM}[$\star$]\label{clm:pub-nokernel-2or}
    Let $\iplan(\textup{OR}_2(v_1,v_2,o))$ be a $\{P,U,B\}$-\BPE{} instance
    that consists of the two binary variables $v_1$ and $v_2$, and the
    variables and actions of the gadget
    $\textup{OR}_2(v_1,v_2,o)$. Furthermore, let the initial state of 
    $\iplan(\textup{OR}_2(v_1,v_2,o))$ be any initial state that sets
    all variables of the gadget $\textup{OR}_2(v_1,v_2,o)$ to $0$ but
    assigns the variables $v_1$ and $v_2$ arbitrarily, and let the
    goal state of $\iplan(\textup{OR}_2(v_1,v_2,o))$ 
    be defined by $\proj{\goal}{o}=1$. Then
    $\iplan(\textup{OR}_2(v_1,v_2,o))$ has a plan if and only if its
    initial state sets at least one of the variables $v_1$ or $v_2$ to
    $1$. Furthermore, if there is such a plan then its length is $6$.
  \end{CLM} 
}
\longversion{\begin{CLM}\label{clm:pub-nokernel-2or}
    Let $\iplan(\textup{OR}_2(v_1,v_2,o))$ be a $\{P,U,B\}$-\BPE{} instance
    that consists of the two binary variables $v_1$ and $v_2$, and the
    variables and actions of the gadget
    $\textup{OR}_2(v_1,v_2,o)$. Furthermore, let the initial state of 
    $\iplan(\textup{OR}_2(v_1,v_2,o))$ be any initial state that sets
    all variables of the gadget $\textup{OR}_2(v_1,v_2,o)$ to $0$ but
    assigns the variables $v_1$ and $v_2$ arbitrarily, and let the
    goal state of $\iplan(\textup{OR}_2(v_1,v_2,o))$ 
    be defined by $\proj{\goal}{o}=1$. Then
    $\iplan(\textup{OR}_2(v_1,v_2,o))$ has a plan if and only if its
    initial state sets at least one of the variables $v_1$ or $v_2$ to
    $1$. Furthermore, if there is such a plan then its length is $6$.
  \end{CLM}
  To see the claim, suppose that there is a plan $\plan$ for
  $\iplan(\textup{OR}_2(v_1,v_2,o))$ and assume for a contradiction
  that both variables $v_1$ and $v_2$ are initially set to $0$. It is
  easy to see that the value of $v_1$ and $v_2$ can not change during
  the whole duration of the plan and that $\plan$ has to contain the
  actions $a_{o_1}$ and $a_{o_2}$. W.l.o.g.~we can assume that $\plan$
  contains $a_{o_1}$ before it contains $a_{o_2}$. Because of the
  preconditions of the actions $a_{o_1}$ and $a_{o_2}$, the variable
  $i_1$ must have value $1$ before $a_{o_1}$ occurs in~$\plan$ and it
  must have value $0$ before the action $a_{o_2}$ occurs in
  $\plan$. Hence, $\plan$ must contain an action that sets the
  variable $i_1$ to $0$. However, this can not be the case, since the
  only action setting $i_1$ to $0$ is the action $a_{v_1}$ which can
  not occur in~$\plan$ because the variable $v_1$ is $0$ for the whole
  duration of $\plan$.

  To see the reverse direction suppose that one of the variables $v_1$
  or $v_2$ is initially set to $1$. If $v_1$ is initially set to one
  then $\seq{a_{i_1},a_{o_1},a_{v_1},a_{i_2},a_{o_2},a_o}$ is a plan
  of length $6$ for $\iplan(\textup{OR}_2(v_1,v_2,o))$. On the other
  hand, if $v_2$ is initially set to one
  then $\seq{a_{i_2},a_{o_2},a_{v_2},a_{i_1},a_{o_1},a_o}$ is a plan
  of length $6$ for $\iplan(\textup{OR}_2(v_1,v_2,o))$. Hence the claim
  is shown true.}

  We continue by showing how
  we can use the gadget $\textup{OR}_2(v_1,v_2,o)$ to construct
  a gadget $\textup{OR}(v_1,\dots,v_r,o)$ such that there
  is a sequence of actions of $\textup{OR}(v_1,\dots,v_r,o)$ that
  sets the variable $o$ to $1$ if and only if at least one of the external
  variables $v_1,\dots,v_r$ are initially set to $1$. Furthermore, if
  there is such a sequence of actions then its length is at most $6
  \lceil \log r \rceil$. Let $T$ be a
  rooted binary tree with root $s$ that has $r$ leaves $l_1,\dots,l_r$ and is of
  smallest possible height. For every node $t \in V(T)$ we make a copy
  of our binary OR-gadget such that the copy of a leave node $l_i$ is
  the gadget $\textup{OR}_2(v_{2i-1},v_{2i},o_{l_i})$ and the
  copy of an inner node $t \in V(T)$ with children $t_1$ and $t_2$ is the gadget
  $\textup{OR}_2(o_{t_1},o_{t_2},o_{t})$ (clearly this needs to be adapted
  if $r$ is odd or an inner node has only one child). For the root
  node with children $t_1$ and $t_2$ the gadget becomes
  $\textup{OR}_2(o_{t_1},o_{t_2},o)$.
  This completes
  the construction of the gadget
  $\textup{OR}(v_1,\dots,v_r,o)$. Using
  Claim~\ref{clm:pub-nokernel-2or} it is easy to verify that the
  gadget $\textup{OR}(v_1,\dots,v_r,o)$ can indeed be used to compute
  the logical ``OR'' or the variables $v_1,\dotsc,v_r$.

  We are now ready to construct the instance $\insti$. $\insti$
  contains all the variables and actions from every instance
  $\insti_1,\dots,\insti_t$ and of the gadget
  $\textup{OR}(v_1,\dots,v_t,o)$. Additionally, $\insti$ contains the binary
  variables $v_1,\dots,v_t$ and the actions $a_1,\dots,a_t$ with
  $\pre(a_i)=\goal_i$ and $\proj{\eff(a_i)}{v_i}=1$.
  Furthermore, the initial state
  $\init$ of $\insti$ is defined as $\proj{\init}{v}=\proj{\init_i}{v}$
  if $v$ is a variable of $\insti_i$ and $\proj{\init}{v}=0$,
  otherwise. The goal state of $\insti$ is defined by
  $\proj{\goal}{o}=1$ and we set $k'=k+6\lceil \log t
  \rceil$. Clearly, $\insti$ can be constructed from
  $\insti_1,\dots,\insti_t$ in polynomial time
  and $\insti$ is a \textsc{Yes}-instance if and
  only if at least one of the instances $\insti_1,\dots,\insti_t$ is
  a \textsc{Yes}-instance. Furthermore, because $k'=k+6\lceil \log t \rceil\leq
  k+6\lceil \log S(k) \rceil=k+6\lceil 1+(k+2)^2 + (k+1)^2 \cdot \log (k+2)  \rceil$, the parameter $k'$
  is polynomial bounded by the parameter $k$. This concludes the proof
  of the lemma.
\end{proof}

\section{Conclusion}

We have studied the parameterized complexity of \BPE{} with respect to
the parameter plan length. In particular, we have shown that
$(0,e)$-\BPE{} is fixed-parameter tractable for $e\leq 2$ and
$\W{1}$-complete for $e>2$.  Together with our previous
results~\cite{BackstromChenJonssonOrdyniakSzeider12} this completes the
full classification of planning in Bylander's system of restrictions
(see Table~\ref{table:bylander}). Interestingly, $(0,2)$-\BPE{} turns
out to be the only nontrivial fixed-parameter tractable case (where the
unparameterized version is NP-hard).

We have also provided a full classification of kernel sizes for
$(0,2)$-\BPE{} and all the fixed-parameter tractable fragments of \BPE{}
in the ``PUBS'' framework. It turns out that none of the nontrivial
problems (where the unparameterized version is NP-hard) admits a
polynomial kernel unless the Polynomial Hierarchy collapses. This
implies an interesting \emph{dichotomy} concerning the kernel size: we
only have constant-size and superpolynomial kernels---polynomially
bounded kernels that are not of constant size are absent.

\longversion{
\bibliographystyle{abbrv}
\bibliography{literature}
}
\shortversion{

}

\end{document}